\newcommand{\myparagraph}[1]{\textbf{\emph{#1}}.}
\newenvironment{lenumerate}[2][]
{\begin{enumerate}[label=(#2\arabic*),leftmargin=0.2in,itemindent=0.15in,#1]}
{\end{enumerate}}
\setlist*[enumerate,1]{label={\itshape\arabic*)}}
\newcommand{\paragraphswithstop}{%
\let\copyparagraph\paragraph%
\renewcommand\paragraph[1]{\copyparagraph{##1.}}%
}
\def\namedlabel#1#2{\begingroup
  #2%
  \def\@currentlabel{#2}%
  \phantomsection\label{#1}\endgroup
}
\newsavebox{\boxifnotempty}
\newcommand{\displayifnotempty}[3]{\sbox\boxifnotempty{#2}\setbox0=\hbox{\usebox{\boxifnotempty}\unskip}%
\ifdim\wd0=0pt
\else
 #1\usebox{\boxifnotempty}#3%
\fi%
}
\newcommand{\ifempty}[2]{\setbox0=\hbox{#1\unskip}%
\ifdim\wd0=0pt%
 #2%
\fi%
}
\newcommand{\ifnotempty}[2]{\setbox0=\hbox{#1\unskip}%
\ifdim\wd0>0pt%
 #2%
\fi%
}
\newcommand{\switchifempty}[3]{\sbox\boxifnotempty{#1}\setbox0=\hbox{\usebox{\boxifnotempty}\unskip}%
  \ifdim\wd0=0pt{}%
  #2%
  \else{}%
  #3%
  \usebox{\boxifnotempty}%
  \fi%
}
\newcommand*\newstoreddef[1]{
  \BeforeClosingMainAux{%
    \immediate\write\@auxout{%
      \string\restoredef{#1}{\csname #1\endcsname}%
    }%
  }%
}
\newcommand*{\restoredef}[2]{
  \expandafter\gdef\csname stored@#1\endcsname{#2}%
}
\newcommand*{\storeddef}[1]{
  \@ifundefined{stored@#1}{0}{\csname stored@#1\endcsname}%
}
\newcommand{\real}[1]{\mathbb{R}^{#1}{}}
\newcommand{\naturals}[1]{\mathbb{N}^{#1}{}}
\newcommand{\transpose}{^\mathrm{T}}
\newcommand{\inverse}{^{-1}}
\newcommand{\vct}[1]{\mathbf{#1}}
\DeclareMathOperator{\cvxhull}{co}
\DeclareMathOperator{\relu}{ReLU}
\providecommand{\cK}{\mathcal{K}}
\providecommand{\cM}{\mathcal{M}}
  \newcommand{\newcolorlabel}[2]{%
  \expandafter\newcommand\csname #1\endcsname[1]{%
    \tikz[baseline]{\node[text=white,fill=#2,anchor=base,text height=1.3ex,text depth=0.1ex,font=\sffamily\bfseries]{##1}}}%
}
\newcommand{\newcommenter}[2]{%
  \expandafter\newcommand\csname #1\endcsname[1]{%
    \fcolorbox{#2}{#2}{\color{white}\textsf{\textbf{#1}}}
    {\color{#2}##1}}%
  \expandafter\newcommand\csname at#1\endcsname{%
    \fcolorbox{#2}{#2}{\color{white}\textsf{\textbf{@#1}}}
    {\color{#2}}}%
  \expandafter\newcommand\csname #1cite\endcsname[1]{%
    \csname #1\endcsname {[##1]}
  }%
  \expandafter\newcommand\csname #1ref\endcsname[1]{%
    \csname #1\endcsname {$\blacktriangleright$##1}
  }%
  \expandafter\newcommand\csname #1hl\endcsname[2]{%
    \colorbox{#2}{\color{white}\textsf{\textbf{#1}}}\sethlcolor{Azure2}\hl{##2}~%
    \expandafter\ifx\csname commentarrow\endcsname\relax$\leftarrow$\else \commentarrow[#2]\fi~%
    {\color{#2}##1}}%
  \expandafter\newcommand\csname #1st\endcsname[2]{%
    \colorbox{#2}{\color{white}\textsf{\textbf{#1}}}\sout{##2}~%
    \expandafter\ifx\csname commentarrow\endcsname\relax$\leftarrow$\else \commentarrow[#2]\fi~%
    {\color{#2}##1}}%
}
\tikzset{
  dim above/.style={to path={\pgfextra{
        \pgfinterruptpath
        \draw[>=latex,|->|] let
        \p1=($(\tikztostart)!1.5em!90:(\tikztotarget)$),
        \p2=($(\tikztotarget)!1.5em!-90:(\tikztostart)$)
        in(\p1) -- (\p2) node[pos=.5,sloped,above]{#1};
        \endpgfinterruptpath
      }
    }
  },
  dim double above/.style={to path={\pgfextra{
        \pgfinterruptpath
        \draw[>=latex,|->|] let
        \p1=($(\tikztostart)!3em!90:(\tikztotarget)$),
        \p2=($(\tikztotarget)!3em!-90:(\tikztostart)$)
        in(\p1) -- (\p2) node[pos=.5,sloped,above]{#1};
        \endpgfinterruptpath
      }
    }
  },
  dim below/.style={to path={\pgfextra{
        \pgfinterruptpath
        \draw[>=latex,|->|] let 
        \p1=($(\tikztostart)!-1em!-90:(\tikztotarget)$),
        \p2=($(\tikztotarget)!-1em!90:(\tikztostart)$)
        in (\p1) -- (\p2) node[pos=.5,sloped,below]{#1};
        \endpgfinterruptpath
      }
    }
  },
}
\tikzset{
    right angle quadrant/.code={
        \pgfmathsetmacro\quadranta{{1,1,-1,-1}[#1-1]}     
        \pgfmathsetmacro\quadrantb{{1,-1,-1,1}[#1-1]}},
    right angle quadrant=1, 
    right angle length/.code={\def\rightanglelength{#1}},   
    right angle length=2ex, 
    right angle symbol/.style n args={3}{
        insert path={
            let \p0 = ($(#1)!(#3)!(#2)$) in     
                let \p1 = ($(\p0)!\quadranta*\rightanglelength!(#3)$), 
                \p2 = ($(\p0)!\quadrantb*\rightanglelength!(#2)$) in 
                let \p3 = ($(\p1)+(\p2)-(\p0)$) in  
            (\p1) -- (\p3) -- (\p2)
        }
    }
}
\newcommand{\pgfextractangle}[3]{%
    \pgfmathanglebetweenpoints{\pgfpointanchor{#2}{center}}
                              {\pgfpointanchor{#3}{center}}
    \global\let#1\pgfmathresult  
}
\newcommand{\commentarrow}[1][Azure4]{\tikz[baseline=-3pt]{\node[shape border uses incircle, fill=#1,rotate=180,single arrow, inner sep=1pt, minimum size=6pt, single arrow head extend=2pt]{};}}
\tikzset{ax/.style={-latex,line width=2pt}}
\tikzset{camera/.style={fill=Sienna1,fill opacity=0.5},%
image plane/.style={draw=RoyalBlue3,line width=2pt}}
\newcommand{\pe}{^{[p,\varepsilon]}}
\newcommand{\cMpe}{\cM\pe}
\title{\LARGE \bf
Lyapunov Neural Network with Region of Attraction Search
}
\author{Zili Wang$^{}$, Sean B. Andersson$^{}$, and Roberto Tron$^{}$
\thanks{Z.~Wang is with the Division of Systems Engineering, S.B.~Andersson and R.~Tron are with the Division of Systems Engineering and the Department of Mechanical Engineering, Boston University, Boston, MA 02215, USA.
           {\tt\small \{zw2445,sanderss,tron\}@bu.edu}.}%
\thanks{This work was supported in part by NSF FRR-2212051 and the Center for Information and Systems Engineering at Boston University.}
}
\begin{document}

\maketitle
\thispagestyle{empty}
\pagestyle{empty}

\begin{abstract}
   Deep learning methods have been widely used in robotic applications, making learning-enabled control design for complex nonlinear systems a promising direction. Although deep reinforcement learning methods have demonstrated impressive empirical performance, they lack the stability guarantees that are important in safety-critical situations. One way to provide these guarantees is to learn Lyapunov certificates alongside control policies. There are three related problems: 
 \begin{enumerate*}
    \item verify that a given Lyapunov function candidate satisfies the conditions for a given controller on a region,
    \item find a valid Lyapunov function and controller on a given region, and
    \item find a valid Lyapunov function and a controller such that the region of attraction is as large as possible.
  \end{enumerate*} 
  Previous work has shown that if the dynamics are piecewise linear, it is possible to solve problems 1) and 2) by solving a Mixed-Integer Linear Program (MILP). In this work, we build upon this method by proposing a Lyapunov neural network that considers monotonicity over half spaces in different directions. 
  We
  \begin{enumerate*}
    \item propose a specific choice of Lyapunov function architecture that ensures non-negativity and a unique global minimum by construction, and
    \item show that this can be leveraged to find the controller and Lyapunov certificates faster and with a larger valid region by maximizing the size of a square inscribed in a given level set.
  \end{enumerate*}  
 We apply our method to a 2D inverted pendulum, unicycle path following, a 3-D feedback system, and a 4-D cart pole system, and demonstrate it can shorten the training time by half compared to the baseline, as well as find a larger ROA. 
\end{abstract}

\section{Introduction}
When designing controllers for dynamical systems, there are a variety of properties which are important to ensure, including stability, safety, and robustness. These properties can be proven via certificate functions. In this work, we focus on stability, although the methods we develop can apply to other properties as well. A powerful technique for certifying stability of nonlinear systems is through the use of Lyapunov functions \cite{haddad2008nonlinear}. A system's equilibrium point is asymptoically stable in the sense of Lyapunov if, starting from any state within a region (called the region of attraction (ROA)), the future states of the system  eventually remain close to the equilibrium. The main challenge is that finding Lyapunov functions for nonlinear systems is difficult in general and requires intuition/manual verification. 

In this paper, we are interested in automated controller synthesis with Lyapunov stability guarantees. A simple method is to linearize the system around the equilibrium point and then use LQR control, which gives a local stability guarantee~\cite{Khalil2002}. Alternatively, one can use nonlinear polynomial approximation of the dynamics, and use semidefinite programming (SDP) to find a Lyapunov function as a sum-of-square (SOS) polynomial~\cite{ahmadi2016some,Majumdar2012ControlDA}. Recently, with the development and wide application of deep learning in robotics, many approaches proposed neural network representations of more complex Lyapunov functions~\cite{Dawson2022SafeCW} to synthesize controllers, from a simple nonlinear representation\cite{chang2019neural}, to more generalized fully connected neural networks~\cite{DaiH-RSS-21, wu2023neural}. 

In order to synthesize a controller with Lyapunov guarantees, one needs to verify whether the Lyapunov conditions are satisfied for all the states within a region~\cite{liu2021algorithms}. Given a bounded input, the two principal techniques used to verify an input-output relationship of a network are:
\begin{enumerate*}
  \item exact verification using either Satisfiability
  Modulo Theories (SMT)~\cite{chang2019neural} (where quadratic and $\tanh$ activation functions are used), or applying mixed-integer programs (MIP) solvers~\cite{bunel2018unified, DaiH-RSS-21} (where (leaky) ReLU activation is used), and
  \item inexact verification via convex relaxation~\cite{fazlyab2020safety}.
\end{enumerate*}

In this paper, we use $\relu$ neural networks and a MILP verifier to give exact Lyapunov stability guarantees. Typically, the MILP is formulated to verify certain properties of ReLU neural networks such that if the optimal objective value is zero, certain conditions can be certified; otherwise a counter-example will be generated and used to reduce the violation~\cite{wong2018provable}. There are two approaches to train networks to satisfy certain properties: 
\begin{enumerate*}
  \item counter-example guided training that adds counter-examples to the training set and minimizes a surrogate loss function of the Lyapunov condition violation on a training set~\cite{chang2019neural}, and
  \item bi-level optimization that minimizes the maximum violation of the Lyapunov conditions via gradient descent~\cite{DaiH-RSS-21}.
\end{enumerate*}  
We use a bi-level optimization training process similar to \cite{DaiH-RSS-21} due to its observed better performance in higher dimensional systems. 

In safety critical systems, it is particularly desirable to understand and find the largest ROA. 
There are several approaches to maximize the ROA. In counter-example guided training, a cost term between the $L_2$ norm of the state and the Lyapunov function can be added to the surrogate Lyapunov loss function to regulate how quickly the Lyapunov function value increases with respect to the radius of the level sets~\cite{chang2019neural,zhou2022neural}; or the Lyapunov function and the sub-level set (inner approximation of the ROA) can be searched simultaneously~\cite{Richards2018TheLN,Wei2023NeuralLC}. However, there is currently no existing work based on the bi-level optimization training procedure to find the largest ROA. 

\myparagraph{Paper contributions}
Our paper contains several contributions on the theoretical side:
\begin{itemize}
  \item While there are existing methods to construct Lyapunov neural networks that give positive definiteness\cite{Richards2018TheLN,chen2021learning_roa, Wei2023NeuralLC}, they are in quadratic form and cannot be verified by MILPs. Instead, we provide a Lyapunov $\relu$ network architecture that is composed of monotonic functions over half-spaces in different directions. The architecture is positive definite and has a unique global minimum. 
  \item The above architecture of the Lyapunov neural network enables a novel approach for finding a controller and its corresponding Lyapunov function with largest possible ROA. Since our architecture has nested level sets, we can enlarge the ROA by enlarging an inscribed cube as a MILP.
  The maximization of the ROA is done using gradient descent, with the gradient from MILP.
\end{itemize}


Through simulations, we provide the following key results:
\begin{itemize}
  \item Compared to \cite{DaiH-RSS-21}, we shorten the training time by half. We attribute this to our model structure, which effectively reduces the search space for the Lyapunov function. 
  \item We demonstrate the strength of our approach in expanding the region of attraction in complex 2D to 4D systems. Our result shows larger/more predictable ROA than previous work.
\end{itemize}

\section{Preliminaries}\label{sec:preliminaries}
\subsection{Lyapunov's Direct Method for Stability}
We briefly review Lyapunov stability theory \cite{Khalil2002}.


\begin{definition}[Control Lyapunov Functions]\label{def:control-lyapunov-conditions}
  Consider the nonlinear system:
  \begin{equation}\label{eq:dicrete_system}
    x_{t+1} = f(x_t, u_t),
  \end{equation}
  where $x_t$ is a state in a domain $\mathcal{X}\subset \mathbb{R}^{n_x}$ and $u_t$ is the control in a input space $\mathcal{U}\subset \mathbb{R}^{n_u}$. Let $\pi(\cdot): \mathbb{R}^{n_x} \rightarrow \mathbb{R}^{n_u}$ denote a control policy with $u_t = \pi(x_t)$. An exponentially-stable control-Lyapunov function (CLF) is a function $V:\mathbb{R}^{n_x}\rightarrow \mathbb{R}$ that is continuously-differentiable and satisfies 
  \begin{subequations}\label{eq:lyapunov_conditions}
    \begin{align}
      V(0) = 0, &\label{eq:lyapunov_conditions_a}\\
      V(x)>0, ~\forall x\in \mathcal{R}\setminus\{0\}, &\label{eq:lyapunov_conditions_b}\\
      V\biggl(f\bigl(x,\pi(x)\bigl)\biggl)-V(x)\leq -\epsilon V(x), ~\forall x\in \mathcal{R}\setminus\{0\},& \label{eq:lyapunov_conditions_c}
    \end{align}
\end{subequations}
over a subset $\mathcal{R}\subset \mathcal{X}$.
\end{definition}

\subsection{ReLUs, MILP, Verification, and Learning}\label{subsec:relu_mip}
We review here how to represent a $\relu$ network and perform verfication over a MILP.

A Leaky ReLU (Rectified Linear Unit) function has the form
\begin{equation}\label{eq:relu_function}
  \sigma(y) = \max(0, cy),
\end{equation}
where $0\leq c<1$ is a given positive scalar. For a fully-connected neural network with leaky $\relu$ activation, the input/output relationship can be represented as
\begin{equation}\label{eq:relu_network}
  \begin{aligned}
      \phi(x;\theta) &= \hat{z}_{k-1}\\
      \hat{z_i} &= W_i z_{i-1} + b_i, ~~i=1,2,\dots,k-1, \\
        z_i &= \sigma(\hat{z_i})=\max(\hat{z_i},c\hat{z}_i), ~~i=1,2,\dots,k-1, \\
        z_0 &= x,
      \end{aligned}
    \end{equation}
where $k$ is the number of layers and $\theta$ contains all the network parameters (i.e. all $W_i$ and $b_i$). 

The MILP formulation stems from the only nonlinear part in the network, namely, the $\relu$ activation function. 
\begin{remark}\label{remark:relu_mip}
If the input to $\sigma(\cdot)$ is bounded ($\hat{z}_{i,lo}\leq \hat{z}_i \leq \hat{z}_{i,up}$), the input/output relationship through $\relu$ activation can be encoded as mixed-integer linear constraints
\begin{equation}\label{eq:relu_milp}
  \begin{aligned}
    z_i\geq \hat{z}_i, ~z_i\geq c\hat{z}_i, &\\
    z_i\leq c\hat{z}_i-(c-1)\hat{z}_{i,up}\beta,&\\ 
    z_i\leq \hat{z}_i-(c-1)\hat{z}_{i,lo}(\beta-1), &\\
    \beta \in \{0,1\}, &
  \end{aligned}
\end{equation}
where the binary variable $\beta$ is active when $\hat{z}_i\geq0$. 
\end{remark}

The $\relu$ network \eqref{eq:relu_network} can be represented as a MILP by replacing the activation units with \eqref{eq:relu_milp}. In a multiple-layer network, the continuous variables in the MILP correspond to the input and each layer's output. With a bounded neural network input, the bound of each $\relu$ neuron input can be computed from the bounds of the connected nodes from the previous layers; these can be obtained by using Interval Arithmetic \cite{wong2018provable}, solving a Linear Programming \cite{tjeng2017evaluating}, or solving a MILP problem \cite{fischetti2018deep}. The relationship between network input and output is therefore fully captured in mixed-integer linear constraints. Mixed Integer Programs have been widely applied in domains including path planning~\cite{ioan2021mixed}, temporal logical~\cite{raman2014model}, and controller synthesis~\cite{DaiH-RSS-21,chen2021learning} to solve different problems.

\section{Problem Statement}
We consider a discrete-time nonlinear dynamical system \eqref{eq:dicrete_system}
with bounded state space $\mathcal{D}\subset \mathcal{X}$ and control space $\mathcal{U}$. We assume that the dynamical system is represented by a piecewise linear function (however, as will be shown in Sec.~\ref{Sec:simulations}, in practice, our techniques are also applicable to $\relu$ neural network approximation of general systems~\cite{leshno1993multilayer}). In the following, we consider three problems: 

\myparagraph{Problem I} Verify the satisfaction of the Lyapunov conditions for a given controller $\pi(\cdot)$ and Lyapunov function over a region $\mathcal{D}$.

\myparagraph{Problem II} Find an exponentially stable controller $\pi(\cdot)$ and Lyapunov function  on $\mathcal{D}$.

\myparagraph{Problem III} Find a controller $\pi(\cdot)$ and Lyapunov function such that the ROA is as large as possible.

We use the Lyapunov conditions~\eqref{eq:lyapunov_conditions} to imply the system stability. However, the Lyapunov conditions themselves do not directly ensure safety, since the system can deviate arbitrarily far before eventually settling to the stable equilibrium. The region of attraction, however, defines a forward invariant set that is guaranteed to contain all possible trajectories of the system, and therefore establishes safety. 
A typical under-approximation of the ROA can be found by verifying the Lyapunov conditions on a sub-level set of the Lyapunov function $\mathcal{R}_\rho=\{x|V(x)\leq\rho\}$ for some value of the parameter $\rho>0$.

In our setting, the dynamic system is represented by a fixed neural network $\phi_{dyn}$ with leaky $\relu$ activation functions, 
\begin{equation}\label{eq:dynamics}
  f(x,u) = \phi_{dyn}(x,u)-\phi_{dyn}(x_{eq},u_{eq}) + x_{eq},
\end{equation}
where $x_{eq}$, $u_{eq}$ are the state and control at the equilibrium point. The equation guarantees that at the equilibrium state and control, the next state remains at the equilibrium point. 

The control policy is represented by a feedforward, fully connected neural network $\phi_{\pi}$ with leaky $\relu$ activation,
\begin{equation}
  \pi(x) = \text{clamp}(\phi_\pi(x)-\phi_\pi(x_{eq})+u_{eq},u_{min}, u_{max}),
\end{equation}
where $u_{min}$, $u_{max}$ are the lower and upper control limit, and $\text{clamp}(\cdot)$ clamps (element-wise) all the values within the control limits.
 
The Lyapunov function is represented by another neural network $\varphi_{v}(x)$,
\begin{equation}\label{eq:lyapunov_with_R}
  V(x) = \phi_{v}(x) + \lambda |R(x-x_{eq})|_1
\end{equation}
where $R$ is a full rank matrix and $\lambda$ is a hyperparameter that balances $\phi_{v}(x)$ and the term containing $R$. The term containing $R$ is strictly positive everywhere except at $x_{eq}$. In practice, adding this term facilitates learning, but it is not strictly necessary for our guarantees on $V(\cdot)$.  

\section{Lyapunov Neural Network}

In the section, we introduce a Lyapunov Neural Network that can be built from $\relu$ units and has a single global minimum and no other global minima \emph{by construction}. With this architecture, some Lyapunov conditions are automatically satisfied, allowing us to formulate a simpler optimization problem to verify remaining conditions to synthesize the controller and Lyapunov function.

\subsection{Monotonic Layers}
The Lyapunov Neural Network is built from \emph{units} that represent monotonically increasing functions; and \emph{layers} that define \emph{star convex} functions and are guaranteed to have a unique global minimum.
  

\begin{definition}[Monotone functions]
  A function $m(\cdot): \real{}$ $\rightarrow \real{}$ is of class $\cM$ if it has the following properties:
  \begin{itemize}
  \item $m(\cdot)$ is continuous, 
  \item $m(0)=0$,
  \item $m(\cdot)$ is strictly increasing everywhere. \label{monotone function condition3}
  \end{itemize}
   If the last condition holds only for non-negative arguments, $m(\cdot)$ is of class $\mathcal{M}_+$.
  
  A function $m(\cdot)$ is of class $\cMpe$ if it is of class $\cM$ and:
  \begin{itemize}
  \item it is piecewise linear,
  \item its derivative has $p\in\naturals{}$ discontinuity points,
  \item all defined derivatives $m'(y)$ are greater than $\varepsilon$, i.e., $\varepsilon$ is a lower bound on the slope of $m$.
  \end{itemize}
\end{definition}
If the function is of class $\cM_+$ and the above conditions hold, then it is of class $\cMpe_+$. Note that functions of class $\cM$ and $\cM_+$ are of class $\cK$~\cite{Khalil2002} when restricted on $[0,\infty)$.


\begin{definition}[Monotonic unit]
  A \emph{monotonic unit} $m\pe$ of order $p$ is defined as 
  \begin{equation}\label{eq:m y}
    m(y)=a\transpose\relu(y\vct{1}-b),
  \end{equation}
where $a,b\in\real{p}$, $\vct{1}\in\real{p}$ is the vector of all ones, and $\relu$ is the element-wise Rectifier Linear Unit operator \eqref{eq:relu_function}.
\end{definition}

\begin{proposition}\label{proposition:monotonic_unit}
  A monotonic unit $m$ is of class $\cMpe$ if:
  \begin{enumerate}
  \item $b\in\real{p}$ is a vector of \emph{biases} satisfying
    $[b]_{i+1}<[b]_i$
    where $[\cdot]_j$ is the $j$-th entry of a vector.
  \item $a\in\real{p}$ is a vector of \emph{slopes} satisfying
    \begin{enumerate}[label=(\alph*)]
    \item \label{it:a one}$[a]_1>0$,
    \item\label{it:a sum} $[a]_{i+1}\geq -\sum_{i'=1}^{i-1} a_{i'}+\varepsilon$.
    \end{enumerate}
  \end{enumerate}
  A monotonic unit $m$ is of class $\cMpe_+$ if the additive condition $[b]_1=0$ is satisfied.
\end{proposition}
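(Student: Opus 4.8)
The plan is to verify, one at a time, the defining properties of class $\cMpe$ starting from the explicit scalar form $m(y)=\sum_{i=1}^{p}[a]_i\,\relu(y-[b]_i)$ (the positive scaling hidden in $\relu$ is immaterial to the sign and monotonicity structure, so I treat $\relu$ as the ordinary rectifier). First I would dispose of the structural properties, which are essentially free: $m$ is a finite sum of continuous, piecewise-affine terms, hence continuous and piecewise linear, and each summand $[a]_i\,\relu(y-[b]_i)$ contributes a single kink at $y=[b]_i$. Because the bias condition $[b]_{i+1}<[b]_i$ is strict, the $[b]_i$ are pairwise distinct, so these kinks sit at $p$ distinct abscissae and $m'$ is a step function with exactly $p$ jump discontinuities, which is the required count.

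The heart of the argument is the slope analysis that simultaneously yields strict monotonicity and the uniform lower bound $m'(y)>\varepsilon$. The key observation is that, sweeping the argument across the ordered breakpoints, exactly one additional unit switches on at each $[b]_i$, so the slope of $m$ on each linear piece is a \emph{partial sum} of the entries of $a$. The goal is to show every such partial sum exceeds $\varepsilon$, and I would prove this by induction over the pieces: condition~\ref{it:a one} seeds the base case by making the first active slope positive, while condition~\ref{it:a sum} is precisely the per-coordinate increment guaranteeing that, once the next term is added, the running partial sum cannot fall below $\varepsilon$. When every defined slope is at least $\varepsilon>0$, $m$ is strictly increasing and obeys the slope bound, giving membership in $\cMpe$. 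I expect this inductive bookkeeping to be the main obstacle: one must carefully match the index ordering in conditions~\ref{it:a one}--\ref{it:a sum} to the geometric order in which the units activate, and strengthen the inductive hypothesis enough that the summation in condition~\ref{it:a sum} telescopes cleanly rather than leaving an uncontrolled leftover term (and one should check the constant on the very first piece, where condition~\ref{it:a one} only asserts positivity).

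Finally I would establish $m(0)=0$ and separate the $\cMpe_+$ case through the placement of the breakpoints relative to the origin. The additive condition $[b]_1=0$ aligns the relevant extreme breakpoint with the origin, so that $m$ vanishes identically on the side where no unit is active; this simultaneously certifies $m(0)=0$ and confines the strictly-increasing, slope-$\varepsilon$ behavior to the non-negative arguments, i.e.\ exactly class $\cMpe_+$. The remaining care is routine: confirm that the flat tail produced by the inactive units is correctly accounted for when stating monotonicity over the intended domain, and that the value $m(0)=0$ is consistent with whichever breakpoint the normalization pins to the origin.
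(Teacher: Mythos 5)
Your proposal follows essentially the same route as the paper's proof: expand $m(y)$ into its piecewise-linear form, observe that on each linear piece the slope is a partial sum of the entries of $a$, use the slope conditions to bound every such partial sum below by $\varepsilon$, and obtain the $\cMpe_+$ case from $[b]_1=0$ forcing $m$ to vanish on the side where no unit is active. The caveats you flag are genuine and are exactly what the paper's one-line argument glosses over: the printed bias ordering and the summation limit in the second slope condition must be read so that the units activate in index order and the partial sums telescope to $\sum_{i'=1}^{i+1}a_{i'}\geq\varepsilon$, and on the first active piece the slope is only guaranteed positive (namely $[a]_1>0$) rather than bounded below by $\varepsilon$.
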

\begin{proof} Expanding the inner product in \eqref{eq:m y}, and using the fact that $\relu(y-b_i)=0$ if $b_i\geq y$, we have 
\begin{equation}\label{eq:m y prime}
m(y)=\sum_{i=1}^{i'} a_i y-\sum_i^{i'} a_i b_i,
\end{equation}
where $i'$ is the largest $i$ such that $b_i<y$, that is, $m$ is piecewise linear in $y$. 
Condition \ref{it:a sum} implies $\sum_{i=1}^{i''+1} a_{i}\geq \varepsilon$ for any $i''$; combined with \ref{it:a one} implies the slope in \eqref{eq:m y prime} is always no less than $\varepsilon$, i.e., the function is always monotonically increasing and hence of class $\cMpe$. If $[b]_1=0$, then $m(y)=0$ for all $y\leq 0$, and the function is of class $\cMpe_+$.
\end{proof}

\begin{definition}[Monotonic Layer]
A \emph{monotonic layer} $M:\real{d_{in}}\to\real{d_{out}}$ is defined as a combination of monotonic units:
\begin{equation}\label{eq:M layer}
[M]_j(x)=\sum_{i=1}^{n_M} c_i m_i(v_i\transpose x)
\end{equation}
where $x\in\real{n_x}$ is the input, 
$v_i\in\real{n_x}$ is a direction and $v_i\neq 0$, $c_i>0$, and $n_M\geq d+1$. 
\end{definition}

\begin{lemma}
   For a given direction $v_i$, the function $m_i(v_i\transpose \cdot):\real{n_x}\rightarrow \real{}$ is:
  \begin{enumerate}
      \item monotonically increasing in the direction $v_i$,
      \item constant along any direction orthogonal to $v_i$.
  \end{enumerate}
\end{lemma}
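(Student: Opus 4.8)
The plan is to exploit that the composite map $x\mapsto m_i(v_i\transpose x)$ depends on $x$ only through the single scalar $v_i\transpose x$, so both claims collapse to one-dimensional facts about $m_i$ combined with elementary inner-product identities. First I would prove claim~(1): fix an arbitrary base point $x\in\real{n_x}$ and a step size $t>0$, and compute the scalar argument at the shifted point, $v_i\transpose(x+t v_i)=v_i\transpose x+t\|v_i\|^2$. Since $v_i\neq 0$ by hypothesis we have $\|v_i\|^2>0$, so moving by $+t v_i$ strictly increases the argument fed into $m_i$. Because $m_i$ is of class $\cMpe\subset\cM$ (by construction, via Proposition~\ref{proposition:monotonic_unit}), it is strictly increasing, and therefore $m_i(v_i\transpose(x+t v_i))>m_i(v_i\transpose x)$, which is the desired strict monotonic increase in the direction $v_i$.

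For claim~(2) I would take any direction $w$ orthogonal to $v_i$, i.e.\ $v_i\transpose w=0$, and evaluate the scalar argument at $x+s w$ for arbitrary $s\in\real{}$: $v_i\transpose(x+s w)=v_i\transpose x+s\,(v_i\transpose w)=v_i\transpose x$. The argument of $m_i$ is thus independent of $s$, so $m_i(v_i\transpose(x+s w))=m_i(v_i\transpose x)$ for all $s$, establishing that the composite function is constant along every direction orthogonal to $v_i$.

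The main obstacle is conceptual rather than computational: the key observation is that precomposing the one-dimensional unit $m_i$ with the linear functional $x\mapsto v_i\transpose x$ confines all variation of the output to the one-dimensional subspace spanned by $v_i$, with the orthogonal complement acting as a level set. Once this is noted, part~(1) is immediate from the strict monotonicity guaranteed by class-$\cM$ membership, and part~(2) from the orthogonality relation $v_i\transpose w=0$. The two points I would take care to flag are the precise sense of ``monotonically increasing'' — I would prove the stronger \emph{strict} increase, which is exactly what class $\cM$ delivers — and the nondegeneracy hypothesis $v_i\neq0$, which is what rules out the degenerate case where the map would collapse to a global constant.
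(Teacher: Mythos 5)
The paper does not actually prove this lemma---it states ``The proof is omitted to save space''---so there is no official argument to compare against; your proposal fills that gap, and its structure is the right one. All variation of $x \mapsto m_i(v_i\transpose x)$ is channeled through the scalar $v_i\transpose x$: a step $+t v_i$ with $t>0$ increases that scalar by $t\norm{v_i}^2 > 0$ (the hypothesis $v_i \neq 0$ being essential, as you correctly flag), while a step along any $w$ with $v_i\transpose w = 0$ leaves it unchanged, so claim (2) is immediate.

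The one point to adjust is the strictness in claim (1). You invoke $m_i \in \cMpe \subset \cM$ to conclude strict increase everywhere, but the paper's monotonic layers also admit units of class $\cMpe_+$---the example layer in Fig.~\ref{fig:monotonic_demo} uses exactly such units---and a unit of class $\cMpe_+$ vanishes identically for arguments below its first breakpoint. For those units the composite is \emph{constant}, not strictly increasing, on the half-space where $v_i\transpose x$ lies below that breakpoint, so the strict version of claim (1) fails there. Hence strict increase is valid only when the unit is genuinely of class $\cM$; in general the lemma's ``monotonically increasing'' must be read in the weak (non-decreasing) sense, and your computation proves that weak statement verbatim once ``strictly increasing'' is replaced by ``non-decreasing'' (an $\cMpe_+$ unit is zero on $(-\infty,0]$ and strictly increasing on $[0,\infty)$, hence non-decreasing on all of $\real{}$). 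This is a restriction on the scope of your strictness claim rather than a flaw in the structure of the argument.
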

The proof is omitted to save space. A monotonic layer is monotonically increasing along radial lines $x(t)=vt$, where $v\in\real{n_x}$. An example of a monotonic layer in $\cMpe_+$ is shown in Fig.~\ref{fig:monotonic_demo}. 
\begin{figure}[htbp!]
  \centering
  \subfloat[Layer\label{fig:monotonic_layer}]{%
  \includegraphics[width=0.3\linewidth]{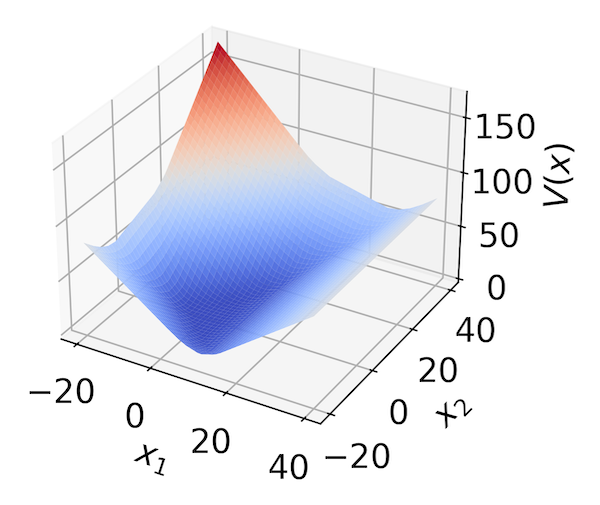}%
}
  \subfloat[Monotonic units in $\cMpe_+$\label{fig:monotonic_units}]{%
  \includegraphics[width=0.7\linewidth]{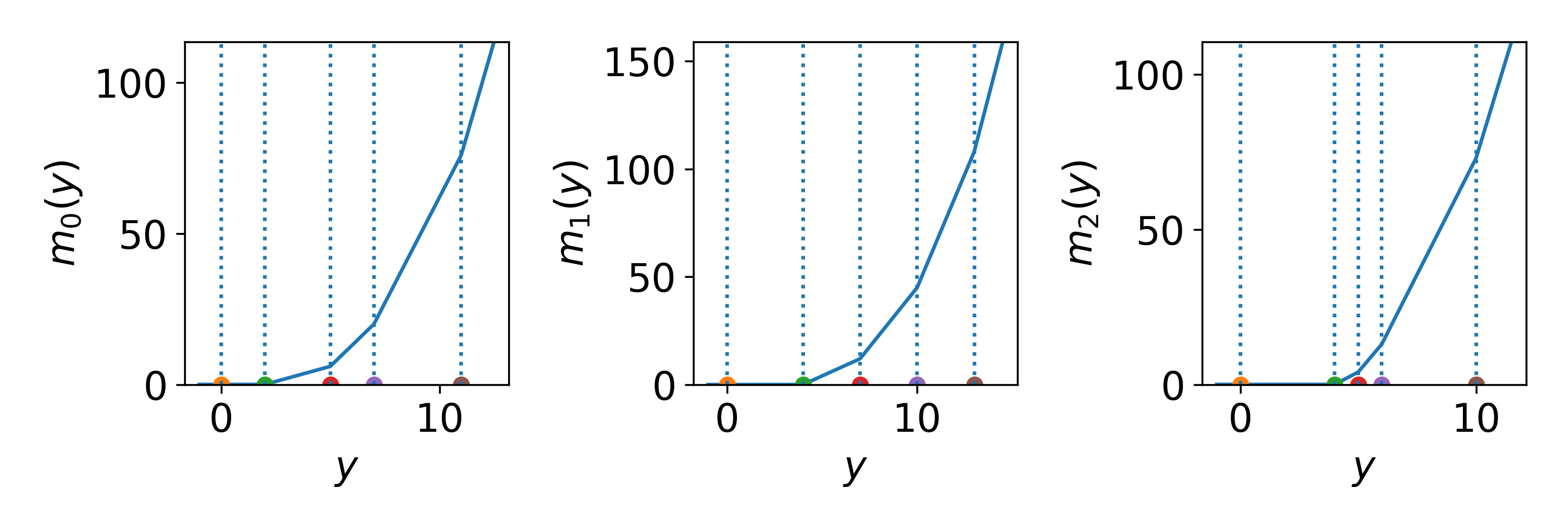}%
}
  \caption{Illustration of Monotonic Layer $V(x)=[M]_1=\sum_{i=1}^{3}m_i(v_i^T x)$ with $v_i\in\{[1;0],[-1;1],[-1;-1]\}$, each $m_i(\cdot)$ is shown in (b).}
    \label{fig:monotonic_demo}
\end{figure}

\subsection{Proposed Lyapunov Network}

\begin{definition}[Lyapunov Neural Network]
An $n$-layer \emph{Lya} \emph{-punov Neural Network} $\phi_v:\real{n_x}\to\real{}$ in \eqref{eq:lyapunov_with_R} is defined as the composition of monotonic layers:
\begin{equation*}
\phi_v(x)=(M_n\circ M_{n-1} \circ \ldots \circ M_1)(x)
\end{equation*}
\end{definition}

\begin{proposition}
  Assuming $\vct{0}$ is in the convex hull of a set of directions $\{v_i^M\}$ for $M_1$, $\phi_v(x)$ has the following properties: 
    \begin{itemize}
    \item $\phi_v(x)\geq 0$, $V(0)=0 \iff x=0$,
    \item $\phi_v(x)$ is monotonically increasing along radial lines,
    \item $\phi_v(x)$ is radially unbounded, has a unique global minimum at $x=0$, and no other local minima.
    \end{itemize}
\end{proposition}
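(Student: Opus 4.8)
The plan is to reduce all three bullet points to four elementary facts about $\phi_v$ and then verify those facts by propagating structure through the composition $M_n\circ\cdots\circ M_1$. The four facts are: (C) $\phi_v(0)=0$; (A) $\phi_v(x)\ge 0$ for every $x$; (B) along every radial line $x(t)=t\bar x$ with $\bar x\neq 0$, the scalar map $t\mapsto \phi_v(t\bar x)$ is \emph{strictly} increasing for $t\ge 0$; and (D) $\phi_v$ is radially unbounded. Granting these, the bullets follow quickly: positive definiteness ($\phi_v(x)=0\iff x=0$) is immediate from (A), (C), and (B), since for $x\neq 0$ we have $\phi_v(x)=\phi_v(1\cdot x)>\phi_v(0\cdot x)=0$; the unique global minimum at the origin follows because $\phi_v\ge 0=\phi_v(0)$ while every other point has strictly positive value; and there are no other local minima because from any $x\neq 0$ the value strictly decreases as one slides toward the origin along the ray through $x$, so every neighborhood of $x$ contains strictly smaller values. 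Monotonicity along radial lines is exactly (B), and radial unboundedness is (D).

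Facts (C) and (A) are the easy ones. Since each unit satisfies $m_i(0)=0$, every layer maps $\vct 0\mapsto\vct 0$, so $\phi_v(0)=0$. Because the units are of class $\cMpe_+$ they are nonnegative-valued, and since every $c_i>0$ each layer output is a nonnegative combination of nonnegative terms; hence each layer maps into the nonnegative orthant and $\phi_v(x)\ge 0$. The hypothesis that $\vct 0$ lies in (the interior of) $\cvxhull\{v_i^{M_1}\}$ enters here: it is equivalent to the positive-spanning property $\max_i v_i\transpose x>0$ for all $x\neq 0$, which guarantees that for every nonzero input at least one first-layer unit receives a strictly positive argument and is therefore strictly positive.

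The crux is Fact (B), the strict radial monotonicity of the composition, and this is where I expect the main difficulty. The $\cMpe_+$ property does the first-layer work cleanly: along $t\bar x$ with $t\ge0$, a unit with $v_i\transpose\bar x>0$ contributes $c_i m_i(t\,v_i\transpose\bar x)$, which is strictly increasing, while a unit with $v_i\transpose\bar x\le 0$ contributes $m_i(\text{nonpositive})=0$; summing, every coordinate of $M_1(t\bar x)$ is nondecreasing in $t$, and by the positive-spanning property at least one coordinate is strictly increasing. The obstacle is propagating this through the inner layers $M_2,\dots,M_n$, since a monotonic layer is order-preserving on the nonnegative orthant only when its direction vectors are themselves nonnegative (so that $\partial[M_l]_j/\partial w_k=\sum_i c_i m_i'(u_i\transpose w)[u_i]_k\ge 0$), and only then does an entrywise increase of the layer input force an entrywise-nondecreasing, and in the relevant coordinate strictly increasing, output. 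I would therefore carry out an induction: assuming $w^{(l-1)}(t)$ is entrywise nondecreasing with a strict increase in some coordinate, use this isotonicity together with a spanning condition on the directions of $M_l$ to conclude the same for $w^{(l)}(t)$, yielding strict monotonicity of $\phi_v(t\bar x)=w^{(n)}(t)$.

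Finally, Fact (D) will follow from (B) together with the slope bound. Along each ray the strictly increasing first-layer coordinate grows at least linearly, because the active units of class $\cMpe_+$ have slope at least $\varepsilon>0$, so $\phi_v(t\bar x)\to\infty$ as $t\to\infty$ for every fixed $\bar x$; since $t\mapsto\phi_v(t\bar x)$ is continuous and increasing and the unit sphere is compact, a Dini-type argument upgrades this pointwise blow-up to $\min_{\|\bar x\|=1}\phi_v(t\bar x)\to\infty$, which is radial unboundedness. The part I expect to require the most care is verifying that both the strictness and the unboundedness survive the composition in full generality; I would isolate the precise constraint on the inner-layer directions (nonnegativity, together with positive spanning of the coordinates that can grow) as a separate lemma and then let the induction run.
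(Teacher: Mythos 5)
Your route is the same as the paper's---per-layer radial monotonicity, propagation through the composition, the hull condition for strictness, and the slope bound $\varepsilon$ for radial unboundedness---but you execute it more carefully, and the step where you stopped is exactly the step the paper's proof glosses over. The paper disposes of the composition with one sentence (``the composition of multiple such functions inherits this property''), and that inheritance is \emph{not} automatic: as you observed, $M_1$ maps a radial line $t\bar x$ to a curve that is only entrywise nondecreasing, not to another radial line, so radial monotonicity of $M_2$ is the wrong property to invoke; what is needed is that $M_2$ be isotone for the entrywise order on the nonnegative orthant, which holds only if its direction vectors are entrywise nonnegative. Without that, a unit of $M_2$ with a mixed-sign direction such as $u=(-1,1)\transpose$, composed with a non-homogeneous first layer (units whose kinks sit at different biases, which the class $\cMpe_+$ permits), yields $\relu\bigl(u\transpose M_1(t\bar x)\bigr)$ that increases and then decreases along a single ray; weighting the remaining units of $M_2$ lightly then makes $\phi_v$ non-monotone along that ray, with spurious minima away from the origin. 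So the extra hypothesis you flagged for the inner layers is genuinely needed rather than a lemma you failed to find, and the proposition read for general $n$ with arbitrary inner directions is not provable as stated. Your second refinement is also correct: one needs $\vct{0}$ in the \emph{interior} of $\cvxhull(\{v_i^M\})$ (equivalently, positive spanning, $\max_i v_i\transpose \bar x>0$ for all $\bar x\neq 0$); if $\vct{0}$ lies merely on the hull's boundary (e.g.\ $v_1=-v_2$ in $\real{2}$), every unit vanishes identically along an orthogonal ray and positive definiteness, strict monotonicity, and radial unboundedness all fail there.

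Where this leaves the comparison: for $n=1$, which is the only case the paper actually instantiates (the experiments use a single monotonic layer), your argument is complete and correct---the first-layer computation gives strict monotonicity along every ray, the slope bound gives at-least-linear growth, and your compactness (Dini-type) argument correctly upgrades ray-wise blow-up to genuine radial unboundedness, a uniformity point the paper's proof never addresses. For $n>1$, your induction with the nonnegativity-plus-spanning lemma is the right repair, but neither your sketch as written (which defers that lemma) nor the paper's one-line appeal to inheritance establishes the multi-layer claim in the generality in which it is stated.
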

\begin{proof}
  By definition, $M_i$ is a function that is monotonically increasing along radial lines originating from zero. The composition of multiple such functions inherits this property. Since the monotonic units are of class $\cMpe$ and $\vct{0}\in \cvxhull(\{v_i^M\})$ ($\cvxhull(\cdot)$ denotes the convex hull of a set), $\phi_v(x)$ is increasing everywhere from zero and radially unbounded. Therefore, it has a unique global minimum at $\phi_v(0)=0$. 
\end{proof} 
\begin{remark}
  The $R$ term in \eqref{eq:lyapunov_with_R}
  can be written in the form \eqref{eq:M layer}. Under this representation, the directions $\{v_i^R\}$ for the $R$ term satisfy the condition $\vct{0}\in\cvxhull(\{v_i^R\})$. Hence, with the $R$ term, $V(x)$ can be proved to possess the same properties as $\phi_v(x)$, but without requiring that $\vct{0} \in \cvxhull(\{v_i^M\})$. 
\end{remark}

\begin{corollary}
  $\phi_v(x)$ is star convex with respect to the origin and has nesting, compact, simply connected, star-convex (with respect to the origin) level sets (i.e., $V\inverse(v_1)\subset V\inverse(v_2)$ for any $v_1<v_2$).
\end{corollary}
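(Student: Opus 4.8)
The plan is to derive every claim from the single structural fact, already established in the preceding proposition, that $V$ (equivalently $\phi_v$) is continuous, radially unbounded, attains its unique global minimum $V(0)=0$ at the origin, and is \emph{strictly} monotonically increasing along every radial ray $t\mapsto V(tx)$, $t\geq 0$; the strictness comes from the monotonic units being of class $\cMpe$, whose slopes are bounded below by $\varepsilon>0$. Throughout I read $V\inverse(v)$ as the sublevel set $S_v\defeq\{x:V(x)\leq v\}$, which is the only reading under which the stated nesting $V\inverse(v_1)\subset V\inverse(v_2)$ can hold.

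First I would prove star-convexity of each sublevel set, which simultaneously establishes that $V$ is star-convex with respect to the origin. Fix $\rho>0$ and $x\in S_\rho$. For any $t\in[0,1]$, radial monotonicity gives $V(tx)\leq V(x)\leq\rho$, so the segment $\{tx:t\in[0,1]\}$ lies in $S_\rho$; hence $S_\rho$ is star-shaped about $\vct 0$. Nesting is then immediate: if $v_1<v_2$ then $S_{v_1}\subseteq S_{v_2}$ by definition of sublevel sets, and the inclusion is strict because, by continuity and radial unboundedness, the intermediate value theorem along any ray produces a point whose value lies in $(v_1,v_2]$, so $S_{v_2}\setminus S_{v_1}\neq\emptyset$.

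Compactness follows from the two remaining global properties: $S_\rho=V\inverse((-\infty,\rho])$ is closed because $V$ is continuous, and bounded because radial unboundedness forbids a sequence $\norm{x_k}\to\infty$ with $V(x_k)\leq\rho$; by Heine--Borel $S_\rho$ is compact. Simple connectedness is the only claim needing a separate (though standard) argument, and it is where I would be most careful. The straight-line contraction $H(x,s)=(1-s)x$, $s\in[0,1]$, is continuous and, by the star-convexity just established, maps $S_\rho\times[0,1]$ into $S_\rho$; it deforms the identity on $S_\rho$ to the constant map at the origin, so each $S_\rho$ is contractible and hence simply connected.

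I expect the main subtlety to be not the homotopy itself but making sure the boundedness half of compactness is genuinely inherited: a single monotonic unit is constant on the hyperplane orthogonal to its direction and so has unbounded slabs as sublevel sets, so boundedness of the sublevel sets of $V$ is not automatic. This is exactly what the hypothesis $\vct 0\in\cvxhull(\{v_i^M\})$ buys us, and I would invoke it only through the ``radially unbounded'' conclusion of the preceding proposition rather than re-deriving it from the convex-hull condition.
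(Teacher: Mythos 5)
Your proof is correct and takes essentially the same approach as the paper's: the key step in both is that radial monotonicity of $\phi_v$ forces every point on the segment from the origin to $x$ to lie in any sublevel set containing $x$, which gives star-convexity. The paper compresses everything else into ``The claims follow,'' and your completions of those omitted steps (sublevel-set reading of $V\inverse(\cdot)$, intermediate value theorem for strict nesting, closedness plus radial unboundedness for compactness, and the straight-line contraction for simple connectedness) are the standard ones and are all sound.
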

\begin{proof}
The function from the origin to any point $x$ is monotonically increasing, hence every point between $0$ and $x$ is inside any level set containing $x$. The claims follow.
\end{proof}
\begin{remark}
  The bias and slope vectors in the monotonic units can be converted to $\relu$ network parameters, in a way similar to~\cite{wang2022bearing}. To satisfy the monotonic unit properties in Proposition~\ref{proposition:monotonic_unit}, the neural network parameters are thresholded to be positive/non-negative during training. 
\end{remark}

\subsection{Verification of Lyapunov Conditions}\label{subsec:bounded_MIP}
With the proposed structure on $V(x)$, Lyapunov conditions \eqref{eq:lyapunov_conditions_a} and \eqref{eq:lyapunov_conditions_b} are satisfied by construction.

To ensure the stability of a given controller, one needs to verify the Lyapunov condition~\eqref{eq:lyapunov_conditions_c} in a given bounded polytope $\mathcal{D}$ around the equilibrium. The verification problem (Problem I) can be formulated as
\begin{equation}\label{eq:verifier_problem}
  \max_{x\in \mathcal{D}} V\biggl(f\bigl(x,\pi(x)\bigl)\biggl)-V(x) + \epsilon V(x).
\end{equation}
In our setting,  $V(x)$, $f(x,\pi(x))$ and $V(f(x,\pi(x)))$ are piecewise-affine functions of $x$; therefore, the problem can be solved through MILPs as in Sec.~\ref{subsec:relu_mip}. If the optimal value is zero, the system is certified; otherwise the optimal value is greater than $0$ and the optimal point is an counter-example.

\subsection{Learning Procedure}
With the contents in Sec.~\ref{sec:preliminaries}, the problem of finding a valid controller and Lyapunov function (Problem II) becomes the problem of minimizing the maximum violation of Lyapunov condition~\eqref{eq:lyapunov_conditions_c}. This can be formulated as the bi-level min-max optimization:
\begin{align}\label{eq:set_training}
  \min_{\theta} \left(\max_{x\in \mathcal{D}} V(f(x,\pi(x))) - V(x) + \epsilon V(x)\right),
\end{align}
where $\theta$ contains the trainable controller and Lyapunov network parameters.

\begin{algorithm}[t]
  \caption{Train controller/Lyapunov function through min-max optimization}\label{alg:lyapTraining_withSet}
  \label{algo:verfication_in_set}
  \begin{algorithmic}[1]
  \State {Given: a candidate control policy $\pi$ and a candidate Lyapunov function $V$}.
  \While{not converged}
  \State {Solve MILP $\max_{x\in \mathcal{D}} V(f(x,\pi(x))) - V(x) + \epsilon V(x)$}
  \If{The MILP has maximal objective $>0$}
      \State Compute the gradient of the MILP objectives w.r.t network parameters $\theta$ as $\frac{\partial \gamma}{\partial \theta}$.
      \State $\theta = \theta - \text{StepSize}\times\frac{\partial \gamma}{\partial \theta}$
  \EndIf
  \EndWhile
  \end{algorithmic}
  \end{algorithm}

Such a problem can be solved in an iterative procedure \cite{wong2018provable} as shown in Algorithm.~\ref{algo:verfication_in_set}. At each iteration, 
with fixed network parameters, the inner maximization problem can be solved using MILP solvers. A general MILP formulation is
\begin{equation}
  \begin{aligned}
    \gamma(\theta) = &\max_{s,\beta} a_\theta^T s + b_\theta \beta \\
                    &\text{s.t}~A_\theta s + B_\theta \beta \leq c_\theta,
    \end{aligned}
  \end{equation}
  where $a_\theta, b_\theta, A_\theta, B_\theta, c_\theta$ are explicit functions of $\theta$, $\beta$ are binary variables, and $s$ (slack variables and $x$) are all the continuous variables in the MILP related to \eqref{eq:relu_milp}. 

  After solving the MILP to optimality, with optimal $s^*$ and  $\beta^*$, the active linear constraints at the solutions are $A^{act}_\theta s^* + B^{act}_\theta \beta^* = c^{act}_\theta$~\cite{DaiH-RSS-21}. Assuming an infinitesimal
  change of $\theta$ does not change the binary variable solution nor the indices of active constraint, and letting $(A^{act}_\theta)^\dagger$ be the pseudo-inverse of $A^{act}_\theta$, then the gradient of the MILP optimal objective with respect to the network parameters, given by
  \begin{equation}\label{eq:max_milp_optimal}
    \gamma(\theta) = a_\theta^T (A^{act}_\theta)^\dagger (c^{act}_\theta - B^{act}_\theta \beta^*)+ b_\theta^T \beta^*,
  \end{equation}
   can be used to update the network. In the case where the assumption does not hold, the problem is degenerate, but the gradient in \eqref{eq:max_milp_optimal} can still be used to improve the network. For further discussion, see \cite{DaiH-RSS-21}. 

  The procedure terminates when the inner maximization objective gives a zero value, yielding a valid controller and a corresponding Lyapunov function that solves Problem II. 
  


\section{Region of Attraction Expansion}
One of the main benefit of our proposed network is that, by leveraging the special monotonicity structure, we can offer a method to enlarge the ROA. We first introduce a bi-level optimization simliar to \eqref{eq:set_training} but where $\mathcal{D}$ is a fixed sub-level set (used to represent the ROA). 
We then expand the level set by defining an inscribed square, and maximizing its area. If the expanded level set satisfies all the Lyapunov conditions, then it becomes the new ROA. This iterative procedure addressing Problem III is summarized in Algorithm~\ref{alg:lyapExpansion_withSet}.
\begin{algorithm}[t]
  \caption{Expand Region of Attraction with Lyapunov}\label{alg:lyapExpansion_withSet}
  \begin{algorithmic}[1]
  \State {Given: a candidate control policy $\pi$ and a candidate Lyapunov function $V$}.
  \While{$\mathcal{R}_r \in \mathcal{D}$} 
  \State Find a valid controller in  $\mathcal{R}_r$ using modified Algorithm~\ref{algo:verfication_in_set} (Sec.~\ref{subsec:level_set_verification})
  \While{less than maximum trials}
      \State {Solve MILP problem \eqref{eq:g_l} to find the inscribed square (Sec.~\ref{subsec:level_set_expansion})}
      \State Compute the gradient of the $l^*(\theta)$ w.r.t. network parameters using $\frac{\partial l^*}{\partial \theta}$
      \State Select a moving direction from \eqref{eq:optimizer_direction} and implement gradient descent (Sec.~\ref{subsec:expand_direction})
  \EndWhile
  \EndWhile
  \end{algorithmic}
  \end{algorithm}

\subsection{Training over a Sub-level Set}\label{subsec:level_set_verification}
 Instead of finding the maximum violation over a fixed pre-defined bounded region $\mathcal{D}$ as in Sec.~\ref{subsec:bounded_MIP}, we are interested in a compact sub-level set $\mathcal{R}_r = \{x|V(x)\leq r\}$, where $r$ is pre-defined and the region of ${R}_r$ depends on $\theta$. The problem is a modified version of Problem II, where the region is given as a level set of the Lyapunov function itself, and a simple modification of Algorithm.~\ref{alg:lyapTraining_withSet} can be used to solve it. The problem of finding a valid controller and Lyapunov function over a level set then becomes
\begin{align}\label{eq:levelSet_training}
  \min_{\theta} \left(\max_{x\in \mathcal{B}_r, V(x)\leq r} V(f(x,\pi(x))) - V(x) + \epsilon V(x)\right)
\end{align}

One difficulty is that, as stated in Remark~\ref{remark:relu_mip}, in order to provide the MILP formulation of the problem, one needs to know the bounded region $\mathcal{B}_r$ that contains ${R}_r$ (i.e. ${R}_r\subset \mathcal{B}_r$ ). To find $\mathcal{B}_r$, we make one additional assumption.
\begin{assumption}
  We assume the monotonic direction set $\{v_i\}$ contains all the axes (with indices set $\mathcal{I}_{axes}$).
\end{assumption} 

\begin{proposition}\label{proposition:bounded_domain_fromSet}
  Given a network $\phi_v(x)$ represented as a combination of monotonic functions as $\sum_{i=1}^{n_M} c_i a_i^T \relu(v_i^T x-b_i)$, the bounded domain $\mathcal{B}_r$ containing the sub-level set (i.e. $\mathcal{R}_r \subseteq  \mathcal{B}_r$) can be found as 
\begin{equation}\label{eq:rectSet_r}
  \begin{aligned}
    p_i &= m_i^{-1} (r), \forall i\in \mathcal{I}_{axes} \\
    \vct{p} &= \text{stack} \{ p_i\} \\
    [\mathcal{B}_r]_j &= ||[\vct{p}]_j||_{\infty}, ~j\leq n_x
  \end{aligned}
\end{equation}
where $\vct{p}\in\mathbb{R}^{n_M\times n_x}$ and $[\mathcal{B}_r]_j$ denotes the largest value of $\vct{p}$ in $j_{th}$ dimension.
\end{proposition}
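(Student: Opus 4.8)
The plan is to prove the containment coordinate-by-coordinate, reducing the multivariate level set to one-dimensional bounds obtained from the axis-aligned monotonic units. The structural fact I will exploit is that every monotonic unit in the network is of class $\cMpe_+$, hence \emph{nonnegative}: $m_i(y)\ge 0$ for all $y$, with equality only for $y\le 0$, while $m_i$ is continuous, strictly increasing, and radially unbounded on $[0,\infty)$ (Proposition~\ref{proposition:monotonic_unit}). Consequently $m_i$ restricted to $[0,\infty)$ is a bijection onto $[0,\infty)$, so $m_i^{-1}(r)$ is well defined for every $r\ge 0$.

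First I would discard the $R$-term: since $\lambda\abs{R(x-x_{eq})}_1\ge 0$, we have $V(x)\ge\phi_v(x)$, so $\mathcal{R}_r=\{x:V(x)\le r\}\subseteq\{x:\phi_v(x)\le r\}$ and it suffices to bound the latter. Next, because every summand of $\phi_v(x)=\sum_{i=1}^{n_M} c_i\, m_i(v_i^\top x)$ is nonnegative, dropping all but one term gives, for each index $i$, the elementary lower bound $\phi_v(x)\ge c_i\, m_i(v_i^\top x)$. Fixing a coordinate $j$, the standing assumption that the direction set contains the axes supplies indices $i^{+},i^{-}\in\mathcal{I}_{axes}$ with $v_{i^{+}}=e_j$ and $v_{i^{-}}=-e_j$. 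For $x$ with $\phi_v(x)\le r$ the one-term bound yields $c_{i^{+}} m_{i^{+}}(x_j)\le r$ and $c_{i^{-}} m_{i^{-}}(-x_j)\le r$. Inverting the increasing maps on $[0,\infty)$ gives $x_j\le m_{i^{+}}^{-1}(r)$ whenever $x_j>0$ and $-x_j\le m_{i^{-}}^{-1}(r)$ whenever $x_j<0$ (with each $c_i$ folded into $m_i$, or equivalently replacing $r$ by $r/c_i$); when the argument is nonpositive the inequality is immediate. Together these give $\abs{x_j}\le\max\{m_{i^{+}}^{-1}(r),\,m_{i^{-}}^{-1}(r)\}$, which is exactly $\norm{[\vct p]_j}_\infty$ for the stacked points $p_i=m_i^{-1}(r)\,v_i$. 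Collecting the bound over all $j\le n_x$ then shows every $x\in\mathcal{R}_r$ lies in the box $\mathcal{B}_r$.

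The main obstacle—and the only place the hypotheses are genuinely used—is the sign bookkeeping: the single-term lower bound $\phi_v(x)\ge c_i m_i(v_i^\top x)$ is valid \emph{only} because the units are of class $\cMpe_+$ and therefore all terms are nonnegative (for two-sided $\cMpe$ units a single axis term can be negative and bounds nothing), and bounding a coordinate on \emph{both} sides forces the axis set to contain $+e_j$ and $-e_j$, which is precisely what the assumption guarantees. A secondary point to verify is that the box is genuinely an \emph{over}-approximation: retaining only the axis terms and discarding the remaining nonnegative ones can only enlarge the set being bounded, so the containment $\mathcal{R}_r\subseteq\mathcal{B}_r$ is conservative but always valid.
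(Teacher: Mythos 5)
Your proof is correct and takes essentially the same approach as the paper's: both arguments hinge on the nonnegativity of the $\cMpe_+$ units, so that each axis-aligned term individually lower-bounds the Lyapunov function, and then invert that monotonic unit at level $r$ to obtain the coordinate bound $m_i^{-1}(r)$. Yours is just the algebraic (contrapositive) rendering of the paper's geometric hyperplane argument, with more explicit bookkeeping of the $\pm e_j$ directions, the $c_i$ coefficients, and the $R$-term, all of which the paper's proof handles implicitly.
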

\begin{figure}[b]
  \centering
  \includegraphics[width=0.75\linewidth]{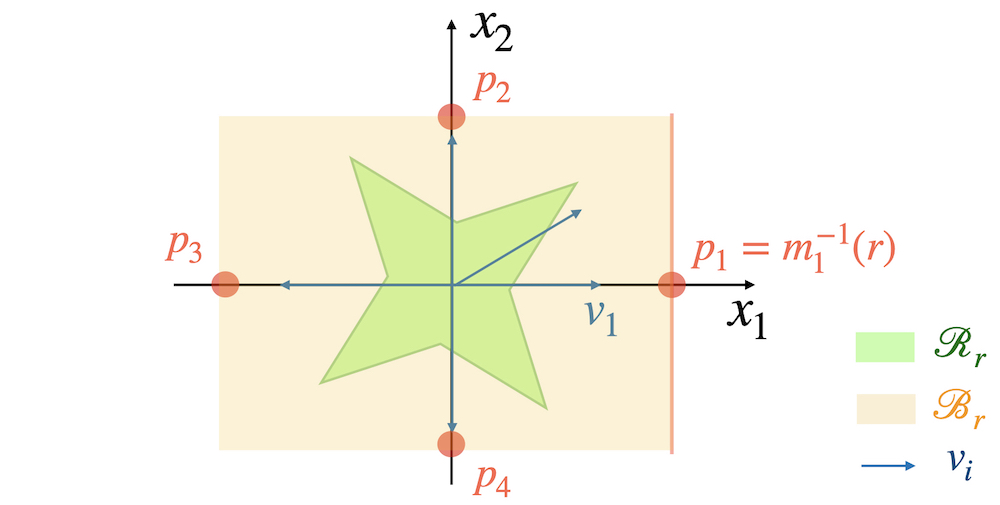}%

  \caption{Visualization for the proof of proposition \ref{proposition:bounded_domain_fromSet}.}
    \label{fig:propostion4_proof}
\end{figure}
\begin{proof}
   Without loss of generality, choose axis $i=1$, and define $p_1$ as the point where $m_1(v_1^T p_1) = r$. Since $V(x) = \sum{m_{i'}(x)}$ and $m_{i'}(x)\geq 0$ for any $i'$, $p_1$ is outside of the sub-level set $V\inverse(r)$, i.e. $V(p_1)>r$.
   Furthermore, $m_1(v_1^T x)$ is constant along hyperplanes normal to $v_1$. Hence, any point on the plane $x_1  = [p_1]_1$ has value no smaller than $r$, and the plane is also outside of $V\inverse(r)$ (see Fig.~\ref{fig:propostion4_proof}). Repeating the same process for all directions $i$, we have that the sub-level set is contained by the set surrounded by all the hyperplanes, i.e., the bounding box that contains $\mathcal{R}_r$.
\end{proof}


\subsection{Approximating a Sub-level Set}\label{subsec:level_set_expansion}
The problem of finding the inscribed cube can be formulated as a MILP:


\begin{equation} \label{eq:l-inf norm of set}
    \min_{l} l \quad \textrm{s.t.} \quad  V(x)\leq l||x-x_{eq}||_\infty,~ \forall x\in\mathcal{R}_r.
\end{equation}

The $l_{\infty}$ norm term can be formulated as mixed-integer linear constraints. Letting 
\begin{equation}\label{eq:g_l}
g(l) = \max_{x\in \mathcal{R}_r,~||x-x_{eq}||_\infty\geq\varepsilon} V(x) - l||x-x_{eq}||_\infty,
\end{equation}
where $\varepsilon$ is a small constant, \eqref{eq:l-inf norm of set} becomes
  \begin{equation}\label{eq:linf_optimization_problem}
  \min_{l}\quad l \quad \textrm{s.t.} \quad g(l) \leq 0
    \end{equation}
 From nonlinear optimization theory \cite{bertsekas1997nonlinear}, the optimal solution of \eqref{eq:linf_optimization_problem} is achieved when the constraints are active (i.e., $g(l^*) = 0$); if not, $l^*$ would be unbounded and we could reach any arbitrarily low value. Therefore, the optimal objective $l^*$ can be found by finding the root of \eqref{eq:g_l} using a numerical method such as bisection search~\cite{burden2015numerical}.

\subsection{Selecting the ROA Expansion Direction}\label{subsec:expand_direction}
\begin{figure}[b]
  \centering
  \includegraphics[width=0.7\linewidth]{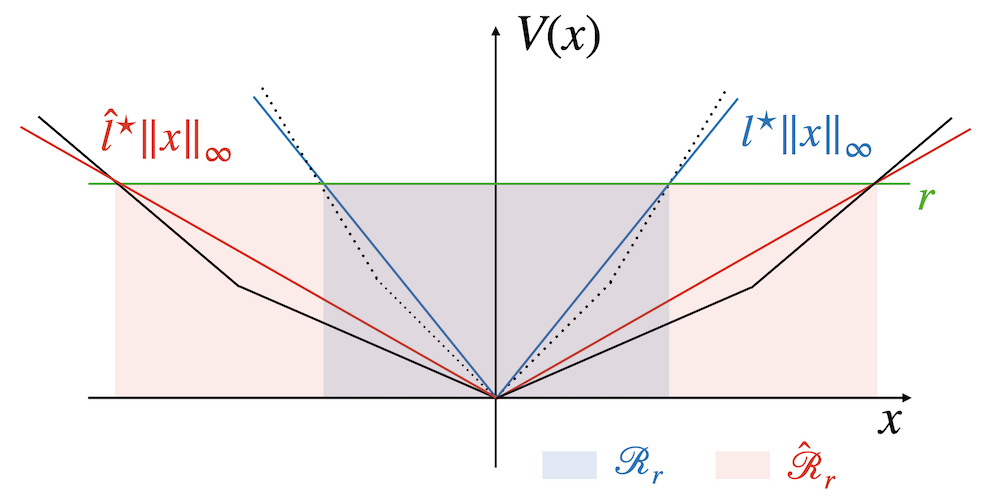}
  \caption{Illustration of level set expansion on 1D Lyapunov functions: the blue box (identified with $l^\star$) is the original level set and the red one (identified with $\hat{l}^\star$) is after expansion.}
    \label{fig:1d_roa_expand_vis}
  \end{figure}
As illustrated in Fig.~\ref{fig:1d_roa_expand_vis}, the sub-level set $\mathcal{R}_r$ can be enlarged by minimizing $l^*$ (which is a function of $\theta$). Specifically, the relationship between $l^*$ and the neural network parameters can be obtained after solving \eqref{eq:g_l} to optimality (i.e. $g(l)=0$). By fixing all the binary variables to their optimal solutions, and keeping only the active linear constraints, \eqref{eq:g_l} becomes
\begin{equation}\label{eq:inner_Lp_findL}
    \xi(\theta) = ~c_{\theta,v}^T s_v  + d_{\theta,v} - l_\theta^* s_{l_{inf}}
\end{equation}
where $s_v$ and $s_{l_{inf}}$ need to satisfy the equations
\begin{equation}
 A_{\theta,v} s_v = b_{\theta,v},
              ~~A_{\theta,l_{inf}} s_{l_{inf}} = b_{\theta,{l_{inf}}}. 
\end{equation}
Here, $s$ contains all the continuous variables in the MILP (including $x$ and other slack variables), and the subscript $v$ ($l_{inf}$) indicates the variable is related to $V(x)$ ($l_{\infty}$). From \eqref{eq:inner_Lp_findL} we have
\begin{equation} \label{eq:l_mip_solution}
  l^*= \frac{c_{\theta,v} s_v + d_{\theta,v}}{s_{l_{inf}}}
  = \frac{c_{\theta,v} (A_{\theta,v}^{-1} b_{\theta,v}) + d_{\theta,v}}{(A_{\theta,{l_{inf}}^{-1}} b_{\theta,{l_{inf}}})}.
\end{equation}
We can use gradient descent on $l^*$ with respect to the network parameters $\theta$ to enlarge the area of the inscribed square, and hence the area of $\mathcal{R}_r$. However, moving directly in the direction of the gradient might lead to a new $\mathcal{R}_r$ that severely violates the Lyapunov conditions. Therefore, a gradient descent direction $w$ should satisfy
  \begin{equation}
      \nabla_\theta \gamma(\theta) w \geq 0,~
      \nabla_\theta l^*(\theta) w \geq 0.
    \end{equation}
An optimal selection can be formulated as an LP problem
\begin{equation}\label{eq:optimizer_direction}
  \begin{aligned}
  \max_{w} ~&\nabla_\theta l^*(\theta) w +s \\
  \text{s.t.}~&\nabla_\theta \gamma(\theta) w \geq s,~ \nabla_\theta l^*(\theta) w \geq 0, \\
    &||w||\leq 1,~s \geq 0.
  \end{aligned}
  \end{equation}

\section{Simulation}\label{Sec:simulations}
\begin{table*}[]
  \caption{Network Structure and Learning Time: numbers in brackets in the baseline network represent the number of neurons in each layer. Our network has a number of directions (dirs) that define the monotonic functions. Each monotonic function is a piecewise linear function with a number of pieces (pcs). The unit for training time is in minutes.
   }
  \label{table:example_networks}
  \centering
  \begin{tabular}{lcccc|cc} 
          \toprule
          &\multicolumn{1}{c}{Baseline Network}
          &\multicolumn{1}{c}{Training Time}
          &\multicolumn{1}{c}{Our Network}
          &\multicolumn{1}{c}{Training Time}
          &\multicolumn{1}{|c}{ROA Training Time} \\
          \midrule
          Inverted Pendulum  & $[8,8,6,1]$ &  97.2 & [5 dirs/4 pcs] & 24.3 &250.0\\
     
          Path Following  
          & $[8,8,6,1]$ & 45.1 & [6 dirs/4 pcs] & 4.8 & 32.4\\

          Third-Order Strict  
          & $[8,8,6,1]$ & 21.4 & [7 dirs/4 pcs] & 10.1 &76.9\\

          Cart Pole 
          & $[8,7,5,1]$ & 74.8 & [7 dirs/4 pcs] &57.2 &69.3\\
          \bottomrule
  \end{tabular}
\end{table*}


In this section, we demonstrate \begin{enumerate*}\item our proposed monotonic network by comparing the convergence time with a previous approach that used simple $\relu$ networks to represent the Lyapunov function, \item our proposed searching method to find the largest possible ROA\end{enumerate*}, by applying it to four examples: an inverted pendulum, a unicycle path following system, a third-order strict-feedback system, and a cart pole system.

The dynamic model used for learning in each case is a fixed network that was pre-trained on a sufficiently large state space $\mathcal{X}$ to approximate the real dynamical system. Both the controller and Lyapunov networks were initialized by training them to approximate the LQR solution. The initial $R$ term in \eqref{eq:lyapunov_with_R} was composed of slight deviations of the eigenvectors of the solution to the Riccati equation of the LQR problem while ensuring the full rank property. All networks were trained following the approach in~\cite{DaiH-RSS-21}. The directions defining the monotonic units in the Lyapunov Network were fixed as a combination of the axis vectors and selected eigenvectors of the solution to the Riccati equation. To verify a large region, we initially verified a small region and gradually enlarged that region. This gave more stable convergence of the algorithms in practice. Note that for running the actual simulations in the results below, the full dynamic model of each system, not its network-based approximation, was used.

\subsection{Systems \label{subsec:systems}}
\myparagraph{System I: Stationary Inverted Pendulum} A second-order nonlinear system with dynamics 
\begin{equation}
  \ddot{\theta} = \frac{u-mgl\sin(\theta)-b\dot{\theta}}{ml^2},
\end{equation}
where $\theta=\pi$ is the upward position of the pendulum and $\theta$ is positive in the counter-clockwise direction. The stabilizing equilibrium state is $\theta=\pi$, $\dot{\theta}=0$. We set the mass term to $m=1$ and the damping coefficient to $b=0.1$. The states are $(\theta,\dot{\theta})$ and the state space is 
\begin{equation*}
  \mathcal{D} = \{(\theta,\dot{\theta} ) | 0\leq \theta\leq 2\pi,
  \lvert \dot{\theta} \rvert \leq 5 \}.
\end{equation*}
The control space is $\{u | \lvert u \rvert \leq 10\}$.

\myparagraph{System II: Wheeled Vehicle Path Following on a Unit Circle} A curvature-dependent unicycle system with kinematics
\begin{equation}
  \begin{pmatrix}
  \dot{d_e}\\[\jot]
  \dot{\theta_e}
  \end{pmatrix}=\begin{pmatrix}
   v \sin{\theta_e}\\[\jot]u-\frac{v \kappa(s)\cos{\theta_e}}{1-d_e \kappa(s)}.
  \end{pmatrix} 
  \end{equation}
We set the curvature to $\kappa(s)=1$ (defining a unit circle), and fix $v=6$. The stabilizing equilibrium state is $d_e = 0$, $\theta_e = 0$, at which the system has no tracking error. The states are $(d_e, \theta_e)$ and the state space is 
\begin{equation*}
  \mathcal{D} = \{(\theta_e, d_e) | \lvert \theta_e \rvert \leq 0.8,
  \lvert d  _e \rvert \leq 0.8 \}.
\end{equation*}
The control space is $\{u | \lvert u \rvert \leq 10\}$.

\myparagraph{System III: Third-Order Strict Feedback Form} A three-dimensional system with the form
\begin{equation}
  \dot{x_1} = e_1 x_2,~
  \dot{x_2} = e_2 x_3,~
  \dot{x_3} = e_3 x_1^2 + e_4 u,
  \end{equation}
where the states are $(x_1,x_2,x_3)$ and the state space is 
\begin{equation*}
  \mathcal{D} = \{(x_1,x_2,x_3) | \lvert x_1 \rvert \leq 1.5,
  \lvert x_2 \rvert \leq 1.5,
  \lvert x_3 \rvert \leq 2 \}.
\end{equation*}
The control space is $\{u | \lvert u \rvert \leq 30\}$.

\myparagraph{System IV: Cart Pole}  
A system with dynamics 
\begin{equation}
    \begin{aligned}
      &(M+m)\ddot{x} - ml\ddot{\theta}\cos{\theta} + ml\dot{\theta}^2\sin{\theta} = u, \\
      &ml^2\ddot{\theta} - mgl\sin{\theta} = ml\ddot{x}\cos{\theta}.
    \end{aligned}
\end{equation}
The states are $(x,\theta,\dot{x},\dot{\theta})$, where $\theta = 0$ is the upward position of the pole, and $\theta$ is positive in the counter-clockwise direction. The state space is
\begin{equation*}
  \mathcal{D} =\{(x,\theta,\dot{x},\dot{\theta}) | \lvert x \rvert \leq 1,
    \lvert \theta \rvert \leq \pi/6,
    \lvert \dot{x} \rvert \leq 1,
    \lvert \dot{\theta} \rvert \leq 1 \}.
\end{equation*}
The control space is $\{u | \lvert u \rvert \leq 30\}$. Previous approaches have found it difficult to find a controller for this system.


\subsection{Comparison with Baseline}\label{subsec:algorithm1}

We first demonstrate the efficacy and efficiency of our monotonic network in terms of training convergence time over a fixed bounded state space $\mathcal{D}$ using Algorithm~\ref{alg:lyapTraining_withSet}. As a baseline for comparison, we use a simple $\relu$ network~\cite{DaiH-RSS-21} (which requires verification of all Lyapunov conditions~\ref{eq:lyapunov_conditions_a}-\ref{eq:lyapunov_conditions_c} during training). We apply both approaches to the four systems described in Sec.~\ref{subsec:systems} and compare them in terms of training time. The network structure for the baseline was selected on a trial-and-error basis for good performance, and our network structured is chosen accordingly to use a comparable number of $\relu$ units (therefore, the same number of binary variables). We use the same network structure for the controller in each. The network structure and the total training time is presented in the first section of Table.~\ref{table:example_networks}. These results show a significant improvement in training time, ranging from a reduction of 24\% for the 4-D cart-pole system, up to 90\% for the 2D path following case.
\begin{figure}[htbp!]
  \centering
  \subfloat[Pendulum\label{fig:baseline_pendulum}]{%
  \includegraphics[width=0.48\linewidth]{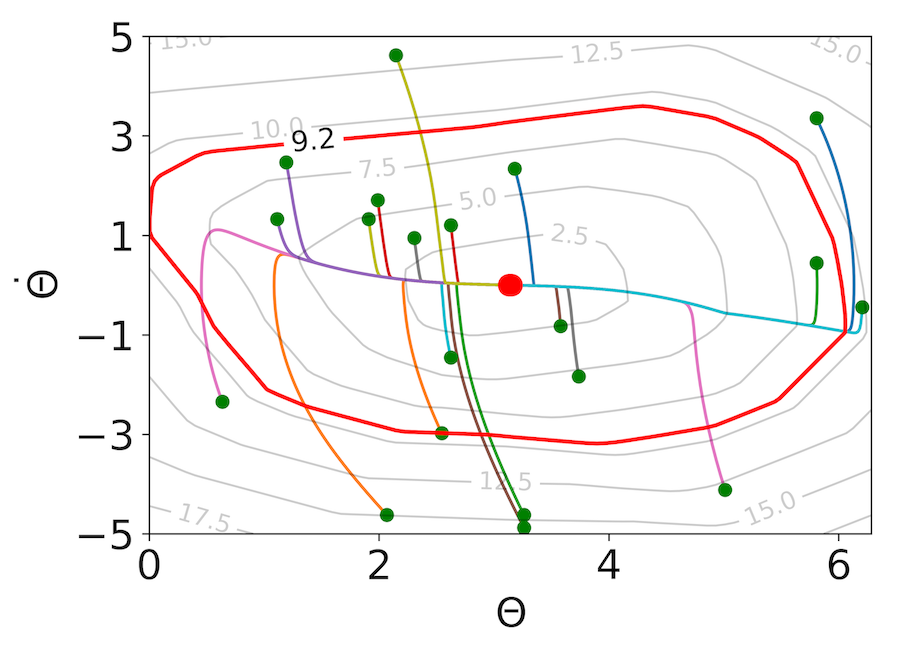}%
}
\quad
\subfloat[Path Following\label{fig:baseline_pf}]{%
\includegraphics[width=0.46\linewidth]{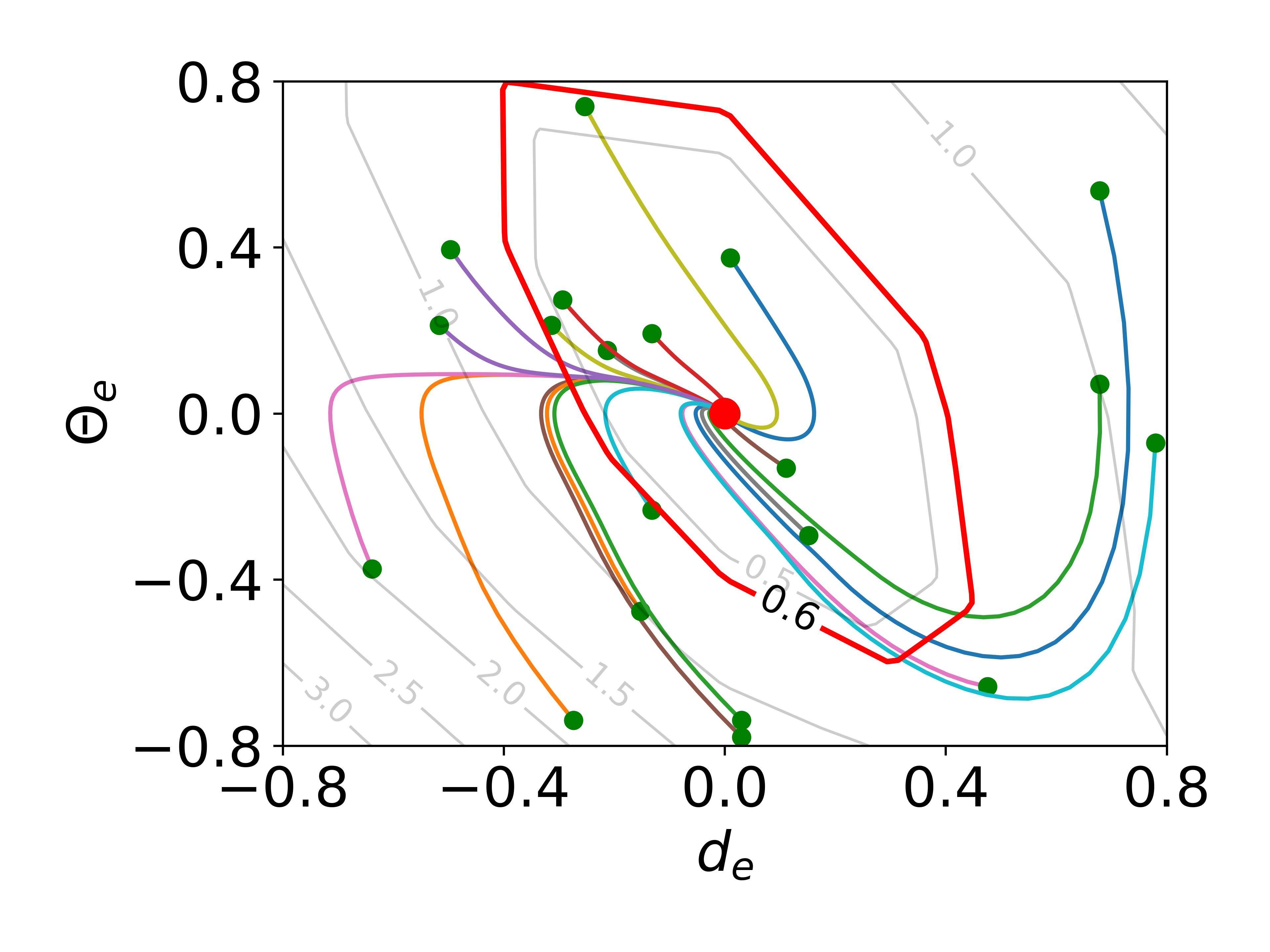}%
}
  \caption{Final Lyapunov function and example controlled trajectories after applying Algo.~\ref{algo:verfication_in_set}. (bold red curve) ROA within the bounded domain. (red dot) Equilibrium point. (green dots) Randomly selected initial conditions, and the corresponding system trajectories under the learned controller.}
    \label{fig:baseline_comparison}
\end{figure}

Fig. ~\ref{fig:baseline_comparison} shows two examples of the resulting Lyapunov function and controlled trajectories, focusing on the 2D systems for easy visualization.

\subsection{ROA Expansion}
Algo.~\ref{alg:lyapExpansion_withSet} combines learning the controller, certifying the Lyapunov function, and expanding the ROA.Fig.~\ref{fig:roa_expansion} shows results on System I (the inverted pendulum). Although the training time is longer than with Algorithm 1 (see the last column in Table 1), it can yield a significantly larger ROA. Figs~\ref{fig:roa_expansion}-\ref{fig:cart_pole_roa} show the results for the four example systems.

Fig.~\ref{fig:init_roa} shows the initial Lyapunov function, the defined sub-level set (chosen to yield a relatively small initial region to allow room for the algorithm to adjust the Lyapunov function and grow the ROA), and the inscribed square for System I. Fig.~\ref{fig:expand_roa} shows the final result, with the level set shown in bold red. Note that the algorithm allows the ROA to extend beyond the domain $\mathcal{D}$. 
To guarantee the ROA, we then find the maximal level set that lies entirely within $\mathcal{D}$ (shown in green). The performance of the resulting controller is demonstrated in Fig.~\ref{fig:expand_roa_traj}-\ref{fig:expand_roa_lyap}, which shows a collection of trajectories from randomly selected initial points (compare against the results from Algo.~\ref{alg:lyapTraining_withSet} in Fig.~\ref{fig:baseline_pendulum}) and the corresponding evolution of the Lyapunov function. Note that some of the selected initial conditions lay outside the ROA but are stabilized nevertheless.

The results of expansion for Systems II and III (Fig.~\ref{fig:3d_order}) both show a significant increase in the final ROA over the initial one. To avoid overly cluttering the image, the results for Systems III (Fig.~\ref{fig:third_order_roa}) do not include the maximal level set inside $\mathcal{D}$. 
\begin{figure}[ht]
  \centering
  \subfloat[Initial\label{fig:init_roa}]{%
  \includegraphics[width=0.43\linewidth]{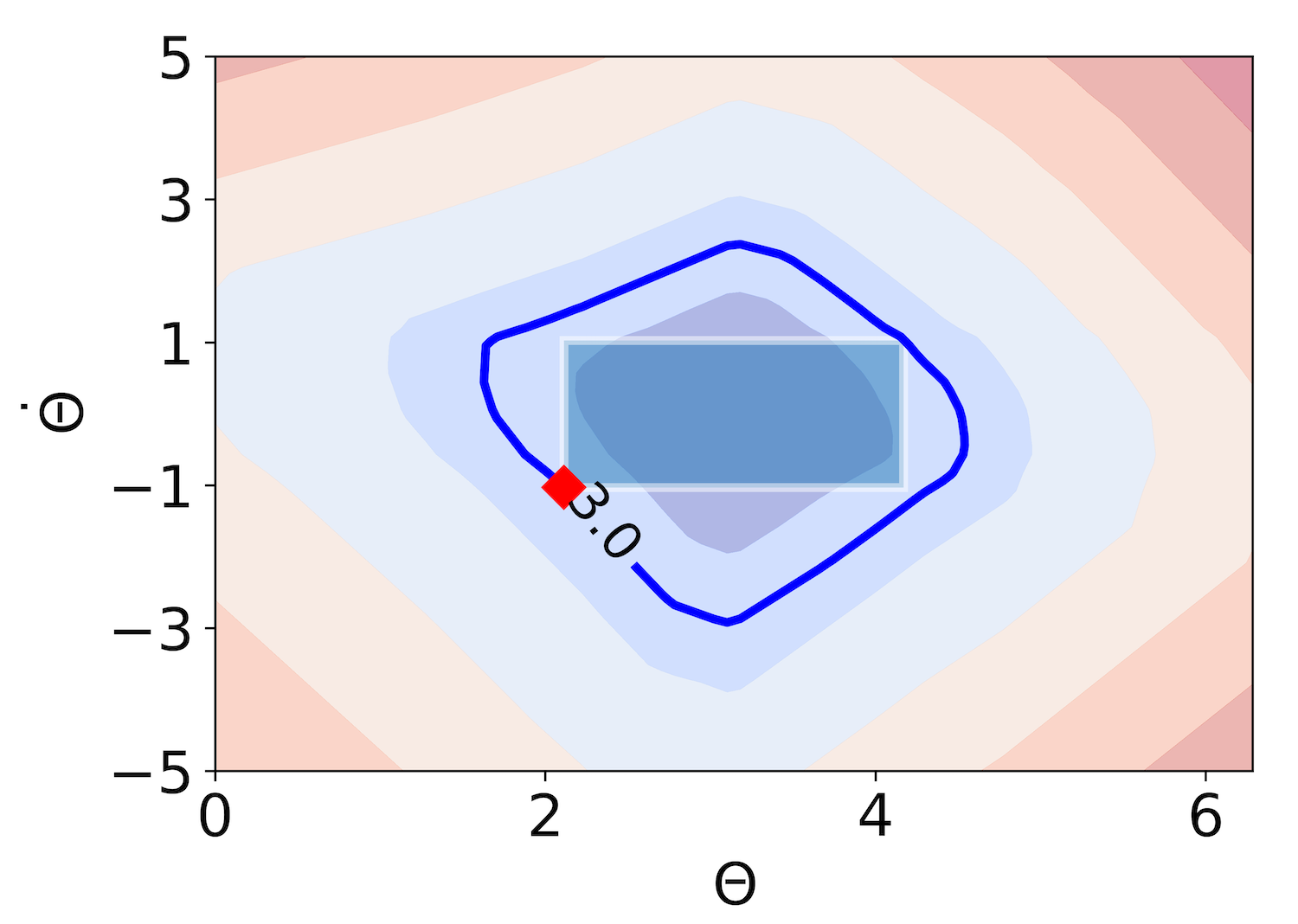}%
}
\quad
  \subfloat[Final\label{fig:expand_roa}]{%
  \includegraphics[width=0.43\linewidth]{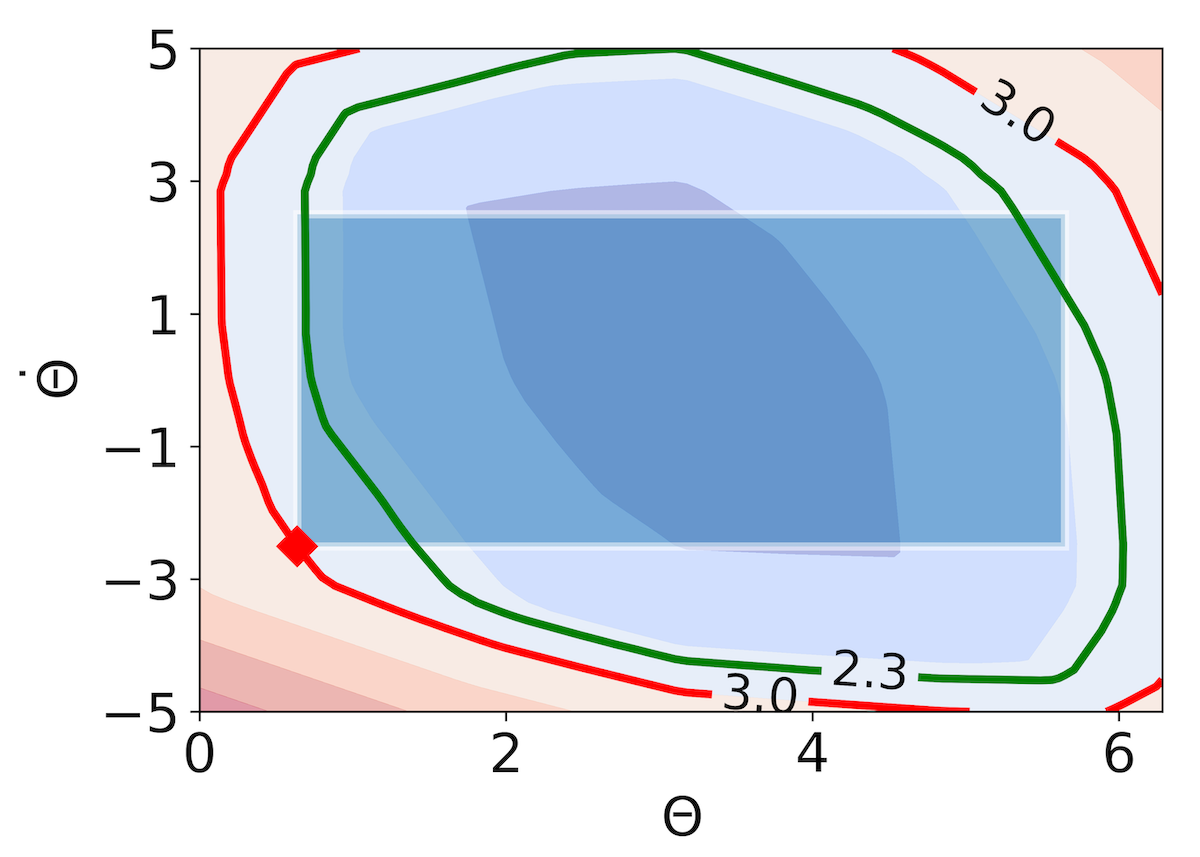}%
}
\quad
\subfloat[Trajectories\label{fig:expand_roa_traj}]{%
\includegraphics[width=0.43\linewidth]{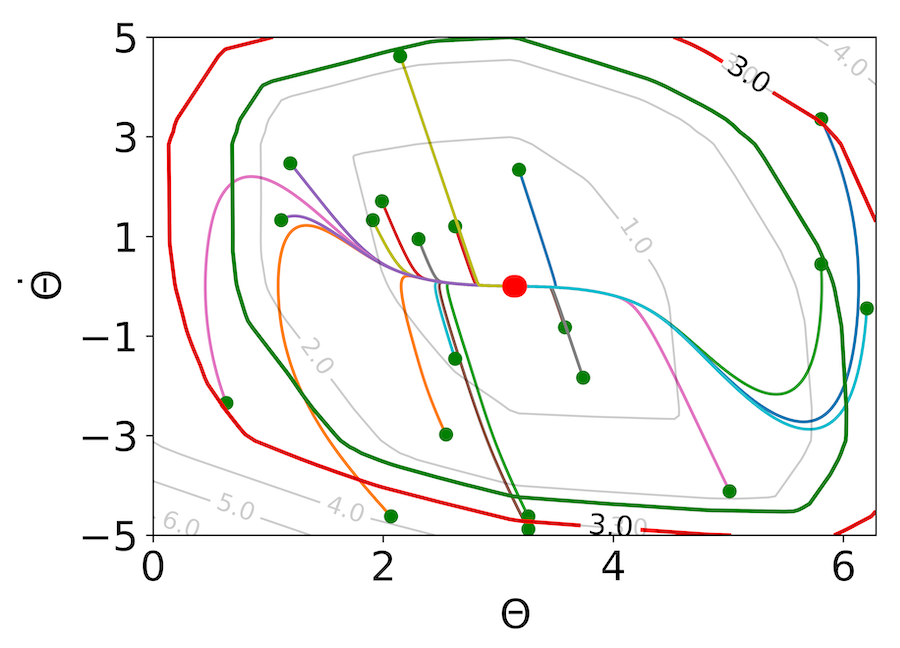}}
\quad
\subfloat[Lyapunov\label{fig:expand_roa_lyap}]{%
\includegraphics[width=0.42\linewidth]{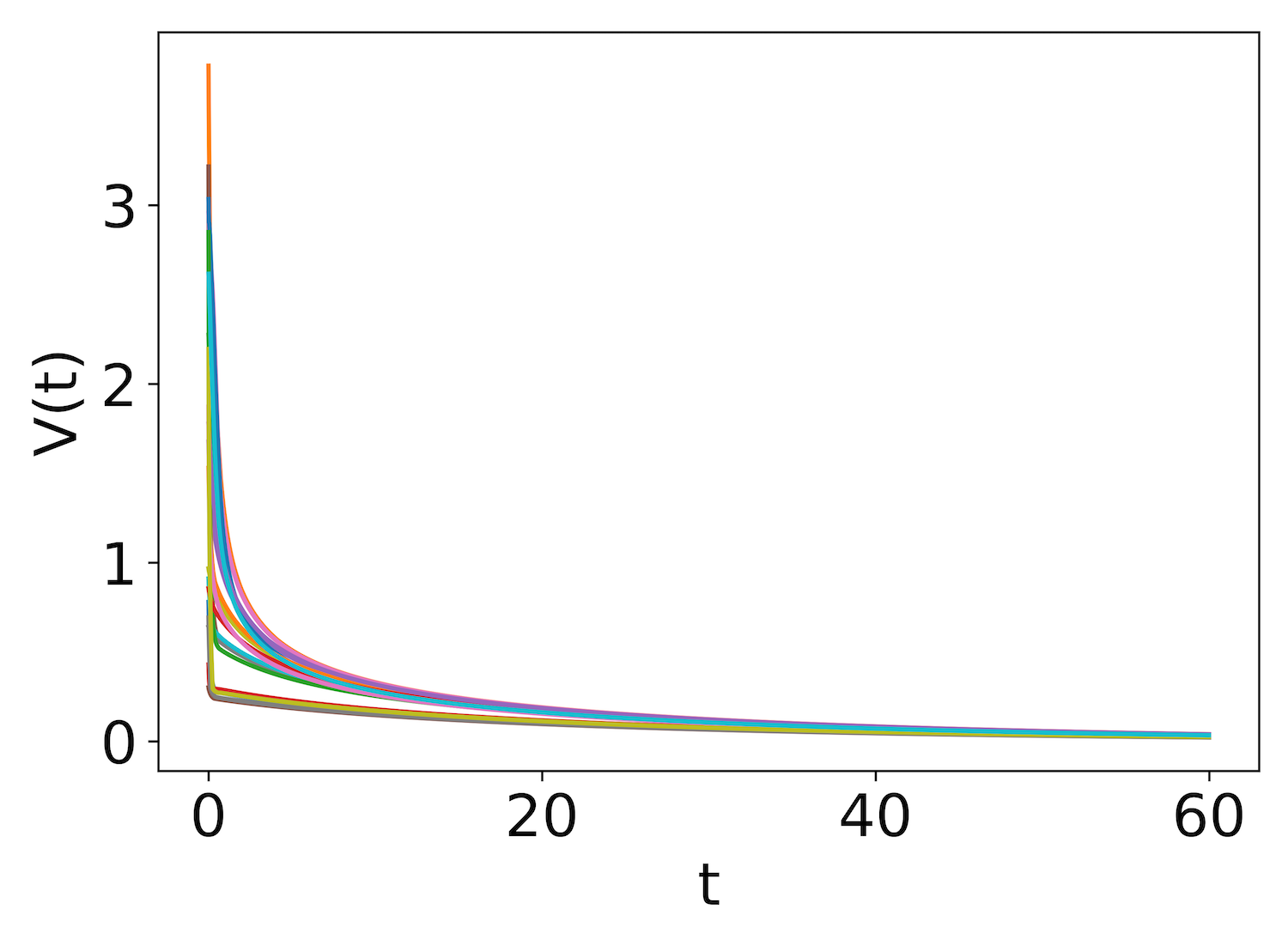}
}
\caption{Illustration of ROA in the inverted pendulum example. (a) $\mathcal{R}_{3.0}$ before expansion is highlighted in blue. (b) $\mathcal{R}_{3.0}$ after expansion is in red; the $l_{\infty}$ norm 
is shown as the inscribed blue box, and the intersection between $l_{\infty}$ norm and $\mathcal{R}_{3.0}$ is shown as a red dot. The level set touching the state space bounds is shown in green  ($\mathcal{R}_{2.3}$). (c) Trajectories. (d) Lyapunov values.}
\label{fig:roa_expansion}
\end{figure}
The results for System IV (the cart-pole) are shown in Figs.~\ref{fig:cart_pole_roa} and \ref{fig:cart_pole_traj}. Because this is a 4-D system, 
we have chosen to show four 3D slices of the ROAs, presenting clear improvement after expansion. It is also interesting to note that the final ROAs appear to be elongated in some dimensions; in future work we intend to consider expansions using inscribed zonotopes to further improve the result. The controlled evolution of the system from a randomly selected set of initial conditions shown in Fig.~\ref{fig:cart_pole_traj} clearly demonstrate that the system converges to the equilibrium point.
\begin{figure}[ht]
    \centering
    \subfloat[Path Following\label{fig:pf_roa}]{%
    \includegraphics[width=0.46\linewidth]{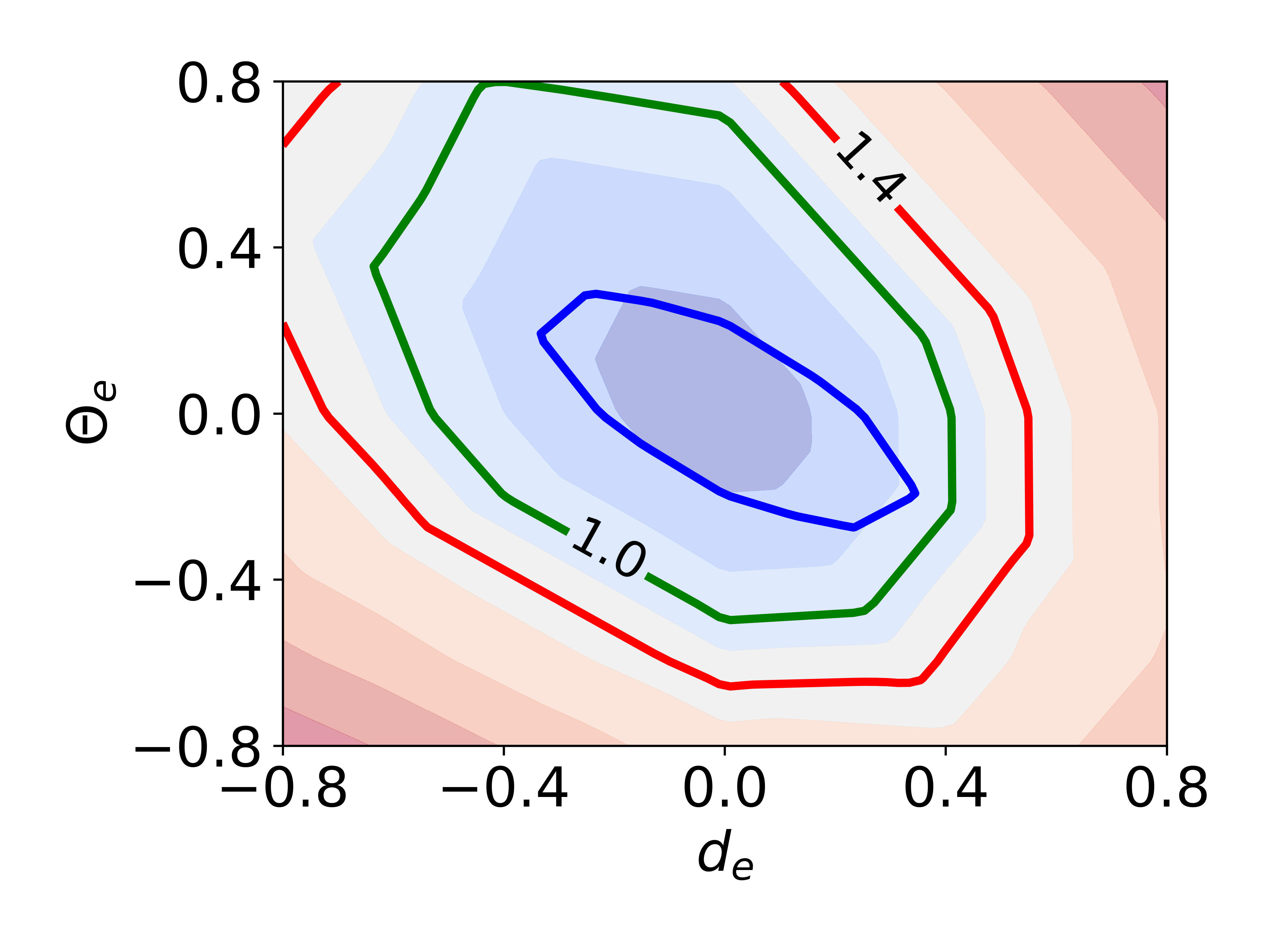}%
  }
  \quad
    \subfloat[Third Order\label{fig:third_order_roa}]{%
    \includegraphics[width=0.44\linewidth]{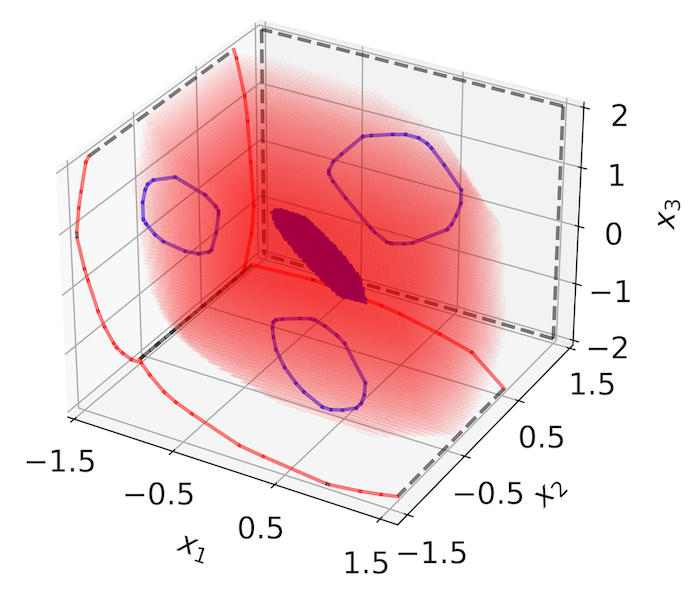}%
    }
    \caption{ROA Expansion. (blue) Initial level set. (red) Final level set. (a) System II, where (green) the maximal level set fully within the domain is also shown. (b) System III. Level sets are projected onto the corresponding planes. Dashed black line indicates the projection is outside of $\mathcal{D}$.}
      \label{fig:3d_order}
\end{figure}
\begin{figure}[ht]
  \centering
  \subfloat[$\dot{\theta}=0$]{%
  \includegraphics[width=0.43\linewidth]{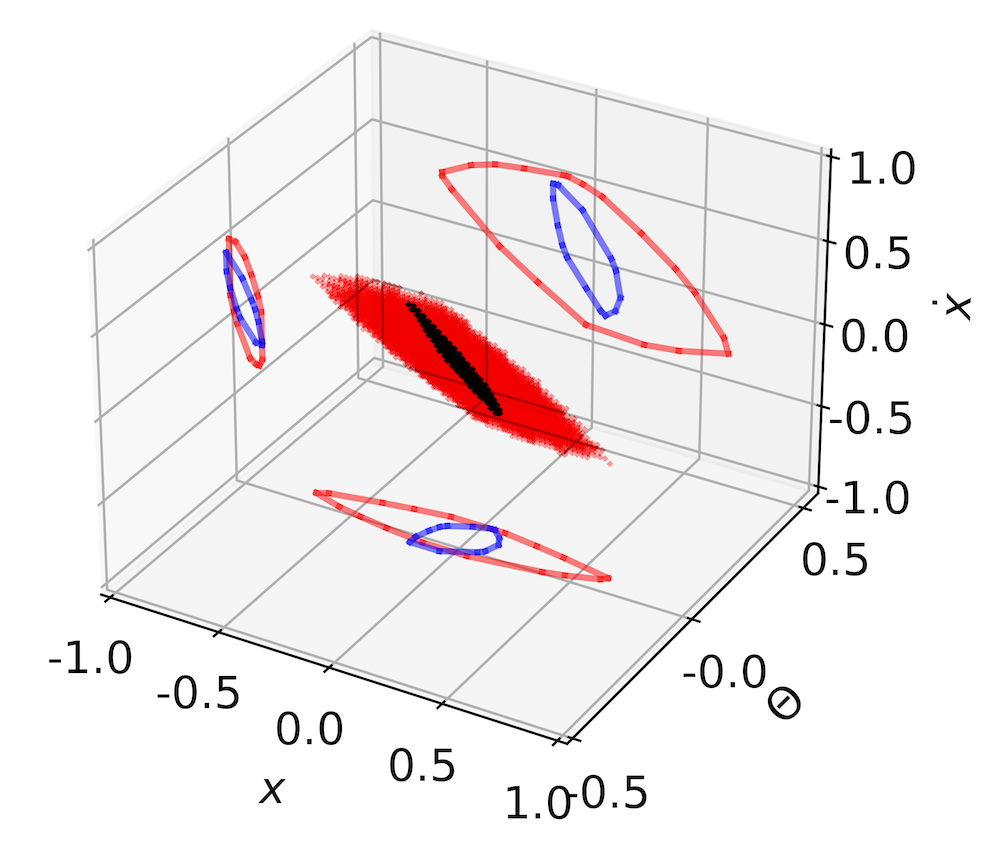}%
  }
  \subfloat[$\dot{x}=0$]{%
  \includegraphics[width=0.43\linewidth]{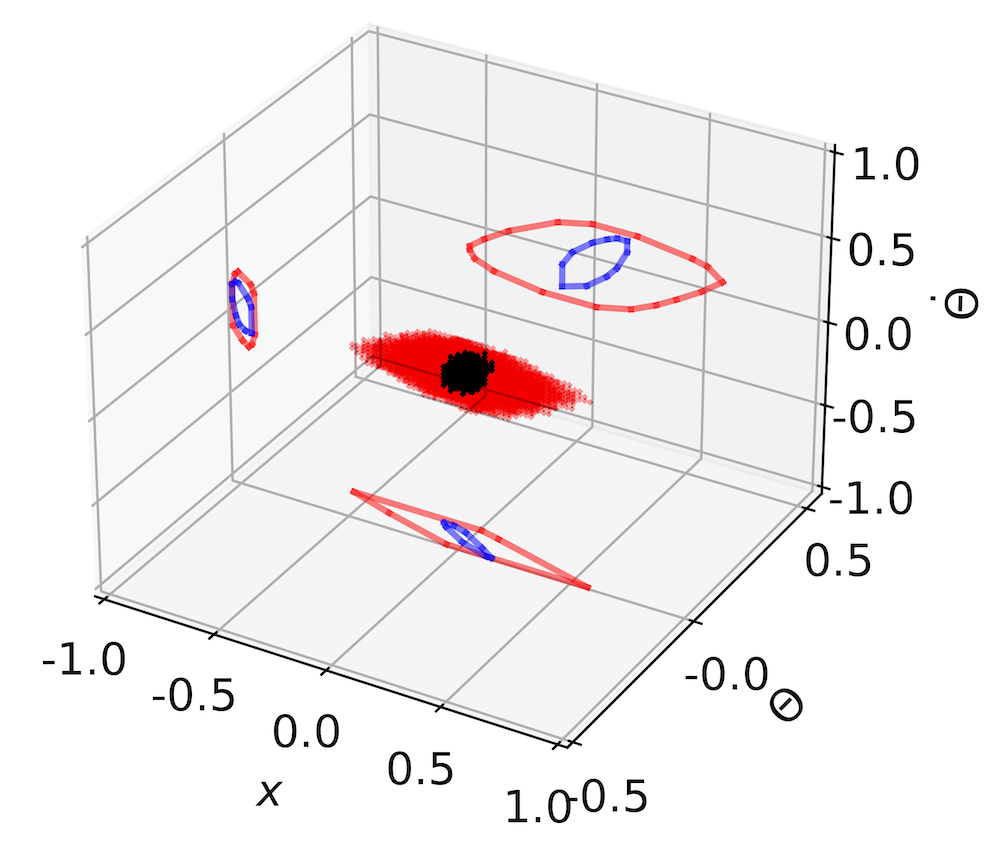}%
}
\quad
\subfloat[$x=0$]{%
\includegraphics[width=0.43\linewidth]{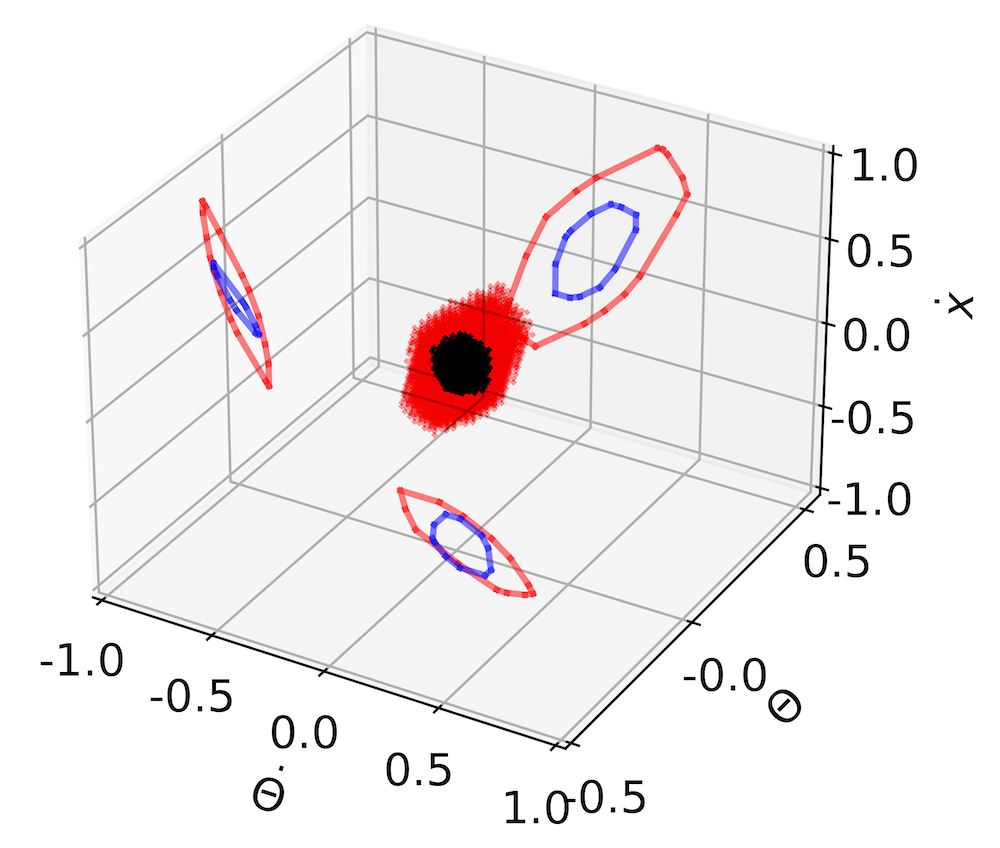}%
}
\subfloat[$\theta=0$]{%
\includegraphics[width=0.43\linewidth]{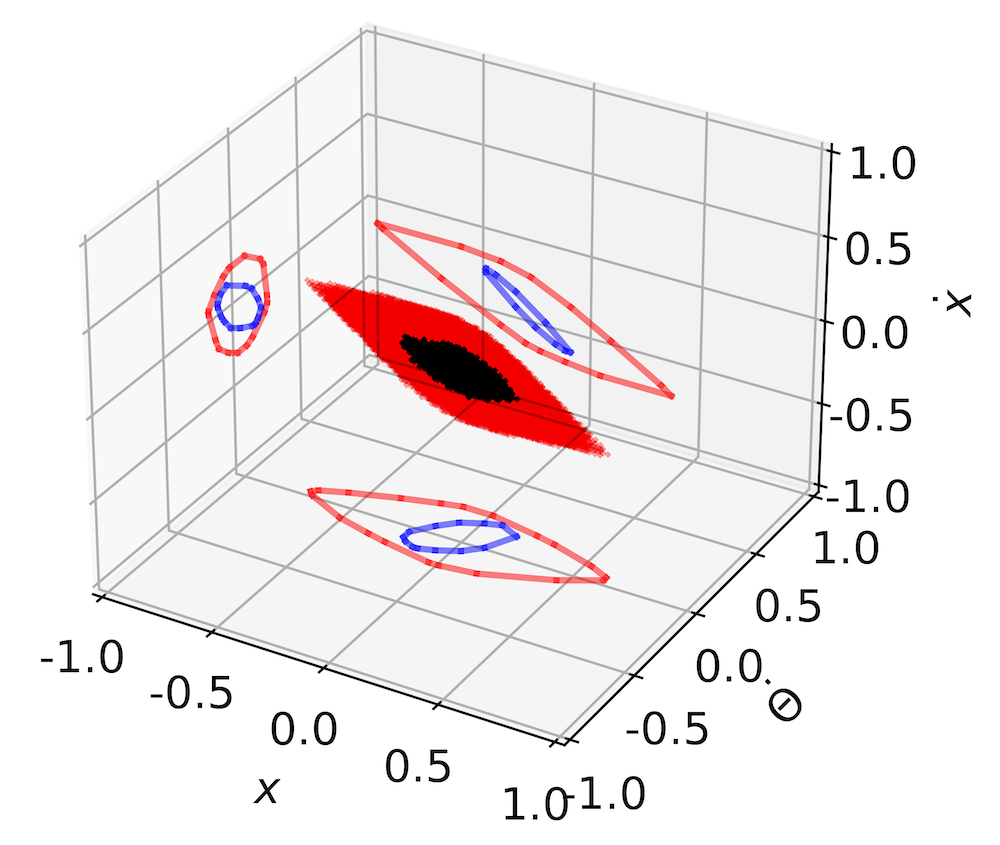}%
}
  \caption{ROA Expansion on System IV. Each plot fixes one of the states to visualize the corresponding 3D slice.}
    \label{fig:cart_pole_roa}
\end{figure}

\begin{figure}[ht]
  \centering

  \includegraphics[width=0.99\linewidth]{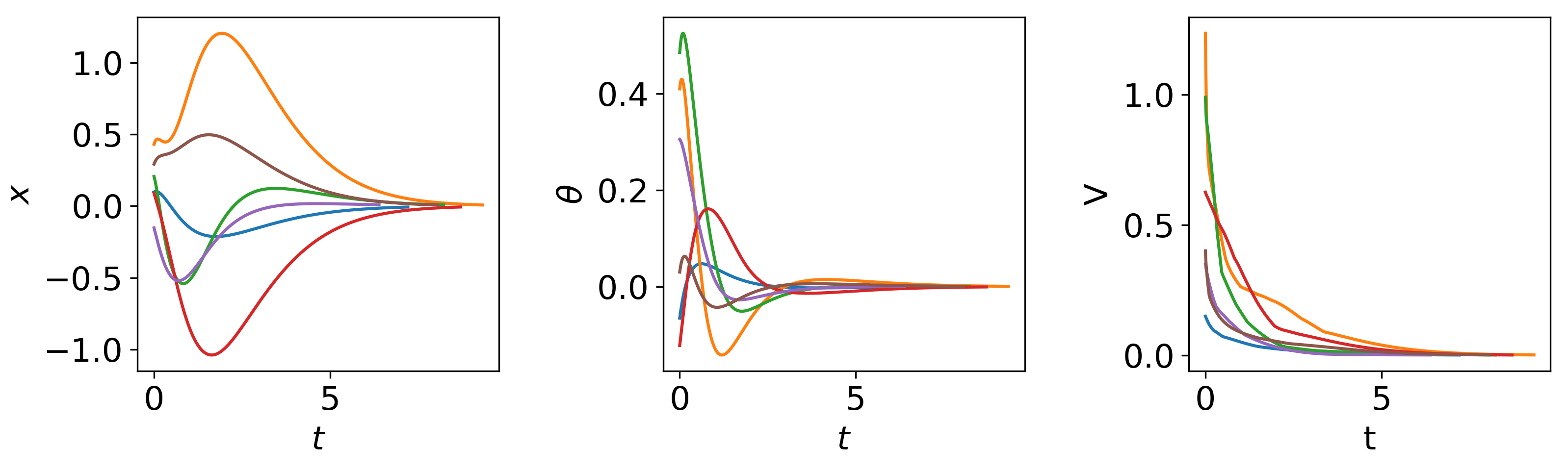}%
  \caption{Controlled trajectories on System IV.}
    \label{fig:cart_pole_traj}
\end{figure}

\section{Conclusions}
In this paper, we proposed a $\relu$ neural network that by construction satisfies the non-negativity property of a Lyapunov function. Our algorithm to learn both controller and verifying Lyapunov function for nonlinear systems requires significantly less training time than a baseline. Taking advantage of the inherent properties of our network structure, we developed an algorithm that concurrently expands the region of attraction for the system, ensuring both stability and safety over as large a domain in the state space as possible. 

In this work we used only a single layer in the Lyapunov neural network, though the proposed structure is more general and in future work we intend to explore the advantage of the more general function representation, including adding additional layers. We also intend to consider modifications to our expansion algorithm that rely on counter-example guided training, similar to~\cite{DaiH-RSS-21}, instead of the min-max approach we currently use.
Finally, our proposed method does not consider any uncertainties in the model. In the future, we plan to analyze the neural network approximations with model errors, using a robust control Lyapunov function approach~\cite{chen2021learning_roa}. 
We also plan to apply our approach in real systems, combining it with control barrier functions to avoid obstacles.

\bibliographystyle{biblio/IEEEtranNoDash}


\end{document}